\def\BibTeX{{\rm B\kern-.05em{\sc i\kern-.025em b}\kern-.08em
    T\kern-.1667em\lower.7ex\hbox{E}\kern-.125emX}}
\newcommand{\BigO}[0]{{\mathcal{O}}\xspace}
\def\thm@space@setup{\thm@preskip=0pt
\thm@postskip=0pt}
\newcommand{\B}{\vspace*{-\smallskipamount}}
\newcommand{\BB}{\vspace*{-\medskipamount}}
\newcommand{\BBB}{\vspace*{-\bigskipamount}}
\newcommand\encircle[1]{%
\tikz[baseline=(X.base)]
   \node (X) [draw, shape=circle, inner sep=-1.5pt, fill=black, text=white] {\strut #1};}
\newcommand*\wrapletters[1]{\wr@pletters#1\@nil}
\def\wr@pletters#1#2\@nil{#1\allowbreak\if&#2&\else\wr@pletters#2\@nil\fi}
\begin{document}
\title{\textsc{Concealer}: SGX-based Secure, Volume Hiding, and \\ Verifiable Processing of Spatial Time-Series Datasets }
\author{Peeyush Gupta, Sharad Mehrotra, Shantanu Sharma, Nalini Venkatasubramanian, and \\  Guoxi Wang}
\affiliation{\institution{University of California, Irvine, USA.\thanks{\textbf{A preliminary version of this paper has been accepted in the  24th International Conference on Extending Database Technology (EDBT) 2021.} \\ This material is based on research sponsored by DARPA under Agreement No. FA8750-16-2-0021.
The U.S. Government is authorized to reproduce and distribute reprints for Governmental purposes notwithstanding any copyright notation thereon. The views and conclusions contained herein are those of the authors and should not be interpreted as necessarily representing the official policies or endorsements, either expressed or implied, of DARPA or the U.S. Government.
This work is partially supported by NSF Grants No. 1527536,  1545071, 2032525, 1952247, 1528995, and 2008993. }}  \texttt{sharad@ics.uci.edu, shantanu.sharma@uci.edu}}
\setlength{\belowdisplayskip}{0pt}
\setlength{\belowdisplayshortskip}{0pt}
\setlength{\abovedisplayskip}{0pt}
\setlength{\abovedisplayshortskip}{0pt}

\renewcommand{\shortauthors}{Gupta, Mehrotra, Sharma, Venkatasubramanian, and Wang}

\begin{abstract}
This paper proposes a system, entitled \textsc{Concealer} that allows sharing time-varying spatial data (\textit{e}.\textit{g}., as produced by sensors) in encrypted form to an untrusted third-party service provider to provide location-based applications (involving aggregation queries over selected regions over time windows) to users.
\textsc{Concealer} exploits carefully selected encryption techniques to use indexes supported by database systems and combines ways to add fake tuples in order to realize an efficient system that protects against leakage based on output-size. Thus, the design of \textsc{Concealer} overcomes two limitations of existing symmetric searchable encryption (SSE) techniques:
(\textit{i}) it avoids the need of specialized data structures that limit usability/practicality of SSE in large scale deployments, and
(\textit{ii}) it avoids information leakages based on the output-size, which may leak data distributions.
Experimental results validate the efficiency of the proposed algorithms over a spatial time-series dataset (collected from a smart space) and TPC-H datasets, each of 136 Million rows, the size of which prior approaches have not scaled to.
\end{abstract}
\maketitle

\section{Introduction}
\label{sec:Introduction}

\bgroup
\def\arraystretch{1.2}
\begin{table*}[!t]
\BBB
\scriptsize
\centering
\begin{tabular}{| l | l| l|l | l |l | l |l|}
\hline
\textbf{Techniques} & \textbf{Frequent and fast insertion} & \textbf{Fast query execution} & \textbf{DBMS supported index} & \multicolumn{3}{|c|}{\textbf{Preventing attacks}} \\ \hline

    &  &   & & \textbf{Data distribution}    & \textbf{Output-size}   & \textbf{Access-patterns}  \\\hline

DET (Always Encrypt~\cite{DBLP:conf/sigmod/AntonopoulosASE20}) & 1 & 1 & \textcolor[rgb]{0.00,0.50,0.00}{Yes} & \textcolor[rgb]{1.00,0.00,0.00}{No} & \textcolor[rgb]{1.00,0.00,0.00}{No} & \textcolor[rgb]{1.00,0.00,0.00}{No}    \\\hline

NDET (Arx~\cite{DBLP:journals/pvldb/PoddarBP19} or Always Encrypt~\cite{DBLP:conf/sigmod/AntonopoulosASE20}) & 2 & 2 or 3 & \textcolor[rgb]{0.00,0.50,0.00}{Yes} & \textcolor[rgb]{0.00,0.50,0.00}{Yes} & \textcolor[rgb]{.00,0.00,0.00}{No} & \textcolor[rgb]{1.00,0.00,0.00}{No}    \\\hline

Indexable-SSE (PB~\cite{DBLP:journals/pvldb/LiLWB14}- or IB~\cite{DBLP:conf/icde/LiL17}-Tree) & 4 & 2 & \textcolor[rgb]{1.00,0.00,0.00}{No}$^\ast$ & \textcolor[rgb]{1.00,0.00,0.00}{No}   & \textcolor[rgb]{1.00,0.00,0.00}{No}  & \textcolor[rgb]{1.00,0.00,0.00}{No} \\\hline

Indexable-SSE with ORAM (Blind Seer~\cite{DBLP:conf/sp/PappasKVKMCGKB14}) & 4 & 4 & \textcolor[rgb]{1.00,0.00,0.00}{No}$^\ast$ & \textcolor[rgb]{1.00,0.00,0.00}{No}   & \textcolor[rgb]{1.00,0.00,0.00}{No}  & \textcolor[rgb]{1.00,0.00,0.00}{Yes} \\\hline

Non-indexable-SSE~\cite{DBLP:conf/sp/SongWP00,DBLP:journals/jcs/CurtmolaGKO11}  & 2 & 4 & \textcolor[rgb]{1.00,0.00,0.00}{No}  & \textcolor[rgb]{1.00,0.00,0.00}{No}   & \textcolor[rgb]{1.00,0.00,0.00}{No}  & \textcolor[rgb]{1.00,0.00,0.00}{No} \\\hline

SGX system (Opaque~\cite{DBLP:conf/nsdi/ZhengDBPGS17})   &  2  &  3 & \textcolor[rgb]{1.00,0.00,0.00}{No} &  \textcolor[rgb]{1.00,0.00,0.00}{No} &  \textcolor[rgb]{1.00,0.00,0.00}{No} & \textcolor[rgb]{1.00,0.00,0.00}{No} \\\hline

MPC or SS (Jana~\cite{DBLP:journals/iacr/ArcherBLKNPSW18}) &  4  &  4
& \textcolor[rgb]{1.00,0.00,0.00}{No} & \textcolor[rgb]{1.00,0.00,0.00}{Yes}  & \textcolor[rgb]{1.00,0.00,0.00}{No} & \textcolor[rgb]{0.00,0.50,0.00}{Yes}  \\\hline

\textbf{\textsc{Concealer}} &  1  &  1 & \textcolor[rgb]{0.00,0.50,0.00}{Yes} & \textcolor[rgb]{0.00,0.50,0.00}{Yes}   & \textcolor[rgb]{0.00,0.50,0.00}{Yes}  & \textcolor[rgb]{0.00,0.50,0.00}{Yes (partial)}\\\hline
\end{tabular}
\caption{Comparing different techniques vs \textsc{Concealer}. Note: 1: Very fast, 2: Fast, 3: Slow. 4: Very slow. $\ast$: Indexable SSEs build their own indexes and their index traversal techniques are not in-built in existing commercial DBMS.}
\label{table:compare_crypto}
\BBB\BBB
\end{table*}
\egroup

We consider the problem wherein trusted data producers ($\mathcal{DP}$) share users' spatial time-series data in the encrypted form with untrusted service providers ($\mathcal{SP}$) to empower $\mathcal{SP}$ to build value-added applications for users. Examples include a cellular company sharing data about the cell tower a user's mobile phone is connected to, or a map service (\textit{e}.\textit{g}., Google Map) sharing the user's GPS coordinates with a third-party providing location-based applications. Another example is an organization/university providing  WiFi connectivity data about the access point a user's device is connected to, for applications such as building dynamic occupancy maps \cite{blincomapny}. We classify applications supported by $\mathcal{SP}$ using user's data into two classes:

\begin{enumerate}[leftmargin=0.2in]
    \item 
    \textbf{Aggregate Applications} that aggregate data of multiple users to build novel applications. Examples include occupancy of different regions, heat maps, and count of distinct visitors to a given region over a period of time. Such applications are already supported by several service providers, \textit{e}.\textit{g}., Google Maps supports information about busy-status and wait times at stores such as restaurants and shopping malls.

\item 
\textbf{Individualized Applications} that allow users to ask queries based on their own past movements, \textit{e}.\textit{g}., locations a person spent the most time during a given time interval, finding the number of people came in contact with, and/or other aggregate operations on user's data. Such applications can be very useful for several contexts including exposure tracing in the context of infectious diseases~\cite{DBLP:journals/corr/abs-2005-12045}.
\end{enumerate}



Implementing applications at the (untrusted) $\mathcal{SP}$ requires: (\textit{i}) $\mathcal{DP}$ to appropriately encrypt data prior to sharing with $\mathcal{SP}$, (\textit{ii}) $\mathcal{SP}$ to be able to execute queries on behalf of the user over the encrypted data, and (\textit{iii}) the user to be able to decrypt the encrypted answers returned by $SP$. {\color{black}Realizing such a data-sharing architecture leads to the following three requirements, (of which the first two are
relatively straightforward, while the third requires a careful design of a new cryptographic technique that this paper focuses on):}







\medskip
\noindent
\textbf{R1: Query formulation by the user.} Given that data is encrypted by $\mathcal{DP}$ and is hosted at $\mathcal{SP}$, the user needs to formulate the query to enable $\mathcal{SP}$  to execute it over encrypted data. The users can formulate an appropriate encrypted query, if they know the key used for encryption by $\mathcal{DP}$. However, $\mathcal{DP}$ cannot share the key with users to prevent them from decrypting the entire dataset. A trivial way to overcome this problem is to involve $\mathcal{DP}$ in processing queries. Particularly, a user can submit queries to $\mathcal{DP}$ that converts the query into appropriate trapdoors to be executed on encrypted data at $\mathcal{SP}$, fetches the partial results from $\mathcal{SP}$, and processes the fetched rows, before producing the final answer to users. Such an architecture incurs significant overhead at $\mathcal{DP}$, which essentially requires $\mathcal{DP}$ to mediate every user query, essentially pushing them to act as a surrogate $\mathcal{SP}$. Thus, \emph{\textbf{the first {\color{black}requirement} is how users can formulate appropriate encrypted queries {\bf without} involving $\boldsymbol{\mathcal{DP}}$ in query processing}}.

{\color{black}We can overcome this requirement \textit{trivially} by using
secure hardware} (\textit{e}.\textit{g}., Intel Software Guard eXtensions\footnote{\scriptsize
Recent Intel CPUs introduced SGX that allows us to create a small trusted execution environment, called \emph{enclave} that is isolated and protected from the rest of the system. SGX protects computations from the operating system (controlled by the third-party) and from numerous applications/system-level attacks. Unfortunately, existing implementations of SGX are prone to side-channel attacks that exploit one of the microarchitectural components of CPUs, \textit{e}.\textit{g}., cache-lines, branch execution, page-table access~\cite{DBLP:conf/ccs/WangCPZWBTG17,DBLP:conf/uss/0001SGKKP17,DBLP:journals/corr/abs-1811-05378}, and power attacks. Nevertheless, systems T-SGX~\cite{t-sgx} and Sanctum~\cite{Sanctum} have evolved to overcome such attacks, and it is believed that future versions of SGX will be resilient to those attacks.} (SGX)~\cite{sgx}) at $\mathcal{SP}$ that works as a trusted agent of $\mathcal{DP}$. SGX receives queries encrypted using the public key of SGX (which we assume to known to all) from users, decrypts the query, converts the query into appropriate secure trapdoors, and provides the answer.

\medskip
\noindent
\textbf{R2: Preventing $\mathcal{SP}$ from impersonating a user.} Since we do not wish to involve $\mathcal{DP}$ during query processing, all users ask queries directly to  $\mathcal{SP}$. Such query representations should not empower  $\mathcal{SP}$ to mimic/masquerade as a legitimate user to gain access to the cleartext data from the answers to the query. Thus, \emph{\textbf{the second {\color{black}requirement} is how will the system prevent $\boldsymbol{\mathcal{SP}}$ to mimic as a user and execute a query}}.

{\color{black}We overcome this requirement \textit{trivially} by building a list of \emph{registered users}}, who are allowed to execute queries on the encrypted data (after a negotiation between users and $\mathcal{DP}$) at $\mathcal{DP}$ and provides it in encrypted form to $\mathcal{SP}$. The registry contains credential information (\textit{e}.\textit{g}., public/private key and authentication information of users) about the users who are interested in $\mathcal{SP}$ applications. Thus, before generating any trapdoor by SGX, it first authenticates the user and provides the final answers encrypted using the public key of the user.

\medskip
\noindent
\textbf{R3: Selecting the appropriate encryption technique.}
{\color{black}Spatial time-series data brings in new challenges (as compared to other datasets) in terms of a large amount of the dataset and dynamically arriving data. Also, spatial time-series data show new opportunities in terms of limited types of queries (\textit{i}.\textit{e}., not involving complex operations such as join and nested queries).} Particularly, the data encryption and storage must sustain the data generation rate, \textit{i}.\textit{e}., the encryption mechanism must support dynamic insertion without the high overhead. Further, cryptographic query execution time should scale to millions of records. Finally, the system must support strong security properties such that the ciphertext representation and query execution do not reveal information about the data to $\mathcal{SP}$. Note that ciphertext representation leaks data distribution only when deterministic encryption (DET) is used. Query execution leaks information about data due to search- and access-patterns leakages, and volume/output-size leakage. 
In~\S\ref{subsec:Comparison and Advantages}, we discuss these leakages and argue that none of the existing cryptographic query processing techniques satisfy all the above requirements. Thus, \emph{\textbf{the {\color{black}third requirement} is how to design a system that has efficient data encryption and query execution techniques, and not prone to such leakages}}.



\medskip
\noindent
\textbf{\Large \textsc{Concealer}.}
{\color{black}We design, develop, and implement a secure spatial time-series database, entitled \textsc{Concealer}. This paper focuses on how \textsc{Concealer} addresses the above-mentioned third requirement, and below, we briefly discuss the proposed solution to the requirement R3. In short, \textsc{Concealer} is carefully designed to support a high rate of data arrival, and large data sets, but it only supports a limited nature of spatial time-series queries required by the domain of interest. To a degree, \textsc{Concealer} can be considered more of a vertical technology compared to general-purpose horizontal solutions, which as will be discussed in~\S\ref{subsec:Comparison and Advantages}, lack the ability to support application/data that motivates \textsc{Concealer}.
} 

{\color{black}
\textsc{Concealer}, for fast data encryption and minimum cryptographic overheads on each tuple, uses a variant of deterministic encryption that produces secure ciphertext (that does not reveal data distribution) and is fast enough to encrypt tuples ($\approx$37,185 tuples/min). Further, \textsc{Concealer} exploits the index supported by MySQL.} Note that we do not use any specialized index (\textit{e}.\textit{g}., PB-tree~\cite{DBLP:journals/pvldb/LiLWB14} and IB-Tree~\cite{DBLP:conf/icde/LiL17}) and do not require to build the entire index for each insertion at the trusted side. Since \textsc{Concealer} users an index supported by DBMS, it supports efficient query execution. For a point query on 136M rows, \textsc{Concealer} needs at most 0.9s. Thus, our implementation of DET and the use of indexes supported by DBMS satisfy the requirements of fast data insertion and fast query execution.

To address the security challenge during query execution, \textsc{Concealer}  (\textit{i}) prevents output-size by fixing the unit of data retrieval of the form of bins, formed over the tuples of a given time period; care is taken to ensure that each bin must be of identical size (by implementing a variant of bin-packing algorithm~\cite{Coffman:1996:AAB:241938.241940}), and (\textit{ii}) hides \emph{partial} access-patterns, due to retrieving a fixed bin having different tuples corresponding to different sensor readings (with different location/time/other values) for any query corresponding to the element of the bin. That means the adversary observes: which fixed tuples are fetched for a set of queries including the real query posed by the user. However, the adversary cannot find which of the fetched tuples satisfy the user query. Since our focus is on practical system implementation, we relax the complete access-pattern hiding requirements. The exact security offered by \textsc{Concealer} will be discussed in~\S\ref{sec:Security Threats and Properties}.

Since we fetch a bin of several tuples, to filter the useless tuples that do not meet the query predicates, SGX at $\mathcal{SP}$ filters them, (while also hides complete access-patterns inside SGX by performing oblivious operations). Further, to verify the integrity of the data before producing the answer, \textsc{Concealer} provides a \emph{non-mandatory} hash-chain-based verification.

\medskip
\noindent\textbf{Evaluation.}
We evaluate \textsc{Concealer} on a real WiFi dataset collected from at UCI. To evaluate its scalability, we executed the algorithms on 136M rows, the size that previous existing cryptographic techniques cannot support. We also compare \textsc{Concealer} against SGX-based Opaque~\cite{DBLP:conf/nsdi/ZhengDBPGS17}. To the best of our knowledge, there is no system that supports identical security properties (hiding output-size and hiding partial access-patterns, while supporting indexes for efficient processing). Our algorithms can be used to deal with non-time-series datasets also; thus, to evaluate algorithms' practicality, we evaluate aggregation queries on 136M rows LineItem table of TPC-H benchmark.

\subsection{Comparison \& Advantages of \textsc{Concealer}}
\label{subsec:Comparison and Advantages}
We discuss common leakages from cryptographic solutions, argue that they do not satisfy the requirements of fast data insertion, fast query execution, and/or security against leakages (see Table~\ref{table:compare_crypto} for a comparison).

\noindent
\textbf{Leakages.} Cryptographic techniques show the following leakages:

\begin{enumerate}[leftmargin=0.2in]

\item \textit{Data distribution leakage from the storage~\cite{DBLP:conf/ccs/CashGPR15}}: allows an adversary to learn the frequency-count of each value by just observing ciphertext. DET reveals such information.

\item \textit{Search- and access-patterns leakages~\cite{DBLP:conf/ndss/IslamKK12,DBLP:conf/ccs/CashGPR15}}: occur during query execution. Search-pattern leakages allow an adversary to learn if and when a query is executed, while access-patterns leakage allows learning which tuples are retrieved (by observing the (physical) address/location of encrypted tuples) to answer a query. Practical techniques, such as order-preserving encryption (OPE)~\cite{DBLP:conf/sigmod/AgrawalKSX04}, DET, symmetric searchable encryption (SSE)~\cite{DBLP:journals/pvldb/LiLWB14,DBLP:conf/icde/LiL17}, and secure hardware-based techniques~\cite{DBLP:conf/nsdi/ZhengDBPGS17,DBLP:conf/sigmod/AntonopoulosASE20,DBLP:conf/sp/SchusterCFGPMR15,DBLP:conf/uss/DinhSCOZ15,DBLP:conf/sp/PriebeVC18}, reveal the access-patterns. In contrast, non-efficient techniques (\textit{e}.\textit{g}., secret-sharing~\cite{DBLP:journals/iacr/ArcherBLKNPSW18,DBLP:journals/pvldb/BaterEEGKR17} or oblivious RAM (ORAM) based techniques) hide access-patterns.

\item \textit{Volume/output-size leakage~\cite{DBLP:conf/ccs/CashGPR15}}: allows an adversary (having some background knowledge) can deduce the data by simply observing the size of outputs (or the number of qualifying tuples).~\cite{DBLP:conf/ndss/IslamKK12,DBLP:conf/ccs/CashGPR15,DBLP:journals/iacr/Naveed15} showed that output-size may also leak data distribution. \emph{Access-patterns revealing techniques implicitly disclose the output-size}. Moreover, the seminal work~\cite{DBLP:conf/ccs/KellarisKNO16} showed that the output-size revealed even due to access-pattern hiding techniques enables the attacker to reconstruct the dataset.
A possible solution is adding fake tuples with the real data, thereby each value has an identical number of tuples and using indexable SSEs. However,~\cite{DBLP:journals/iacr/Naveed15} showed that it will be even more expensive than simply scanning the entire database in SGX (or download the data at $\mathcal{DP}$ to execute the query locally). Existing output-size preventing solutions, \textit{e}.\textit{g}., Kamara et al.~\cite{DBLP:conf/eurocrypt/KamaraM19} or Patel et al.~\cite{DBLP:conf/ccs/PatelPYY19}, suffer from one major problem:~\cite{DBLP:conf/eurocrypt/KamaraM19} fetches $\alpha\times \mathit{max}$, $\alpha>2$, rows, while~\cite{DBLP:conf/ccs/PatelPYY19} fetches $2\times \mathit{max}$ rows with additional secure storage of some rows (which is the function of DB size), where $\mathit{max}$ is the maximum number of rows a value can have. Thus, both~\cite{DBLP:conf/eurocrypt/KamaraM19} and~\cite{DBLP:conf/ccs/PatelPYY19}  fetch more than the desired rows, \textit{i}.\textit{e}., $\mathit{max}$. Moreover, both~\cite{DBLP:conf/eurocrypt/KamaraM19} and~\cite{DBLP:conf/ccs/PatelPYY19} cannot deal with dynamic data.
\end{enumerate}


\medskip
\noindent\textbf{Existing techniques in terms of data insertion, query execution, and leakages.}
Existing encrypted search techniques differ in their support for dynamic data, efficient query execution, and offered security properties. For instance, DET supports very efficient insertion and query processing, while its ciphertext data leaks data distribution.

Non-indexable techniques/systems (\textit{e}.\textit{g}., SSE~\cite{DBLP:conf/sp/SongWP00,DBLP:journals/jcs/CurtmolaGKO11}, secret-sharing (SS)~\cite{DBLP:journals/iacr/ArcherBLKNPSW18,DBLP:journals/pvldb/BaterEEGKR17}, secure hardware-based systems~\cite{DBLP:conf/nsdi/ZhengDBPGS17}) allow fast data insertions by just encrypting the data, but have inefficient query response time, due to unavailability of an index, and hence, reading the entire data. SS hides search- and access-patterns, while others reveal. Moreover, all such techniques are prone to output-size leakage. 

In contrast, indexable techniques/systems (\textit{e}.\textit{g}., indexable SSEs (such as PB-Tree~\cite{DBLP:journals/pvldb/LiLWB14}, IB-Tree~\cite{DBLP:conf/icde/LiL17}) and secure hardware-based index~\cite{DBLP:conf/sp/MishraPCCP18})  have faster query execution, but show slow data insertion rate, due to building the entire index at the trusted side for each data insertion; \textit{e}.\textit{g}.,~\cite{DBLP:conf/sp/PappasKVKMCGKB14} showed that creating a secure index over 100M rows took more than 1 hour. Moreover, these indexable techniques use \textbf{\emph{specialized indexes}} that require specialized encryption and tree traversal protocols that are not supported in the existing standard database systems. This, in turn, limits their usability in dealing with large-scale time-series datasets. All such indexable solutions reveal output-size. While indexable solutions mixed with ORAM (\textit{e}.\textit{g}.,~\cite{DBLP:conf/sp/PappasKVKMCGKB14}) hide search- and access-patterns, they are not efficient for query processing (due to several rounds of interaction between the data owner and the server to answer a query). {\color{black} In summary, spatial time-series data adds complexity since (\textit{i}) it can be very large, and (\textit{ii}) arrives dynamically (possibly a high velocity). Existing techniques, as discussed above, are not suitable to support secure data processing over such data.}

\medskip
\noindent
\textbf{Advantages of \textsc{Concealer}.}
{\color{black}(\textit{i}) {\emph{Frequent data insert.}} We deal with frequent bulk data insertions (which is a requirement of spatial time-series datasets).}
(\textit{ii}) {\emph{Deal with large-size data.}} We handle large-sized data with several attributes and large-sized domain efficiently, as our experimental results will show in~\S\ref{sec:Experimental Evaluation}.
(\textit{iii}) {\emph{Output-size prevention.}} While \textsc{Concealer} satisfies the standard security notion (supported by existing SSEs), \textit{i}.\textit{e}., indistinguishability under chosen keyword attacks (IND-CKA)~\cite{DBLP:journals/jcs/CurtmolaGKO11}, it also prevents output-size attacks, unlike IND-CKA.
(\textit{iv}) {\emph{Oblivious processing in SGX.}} As we use the current SGX architecture, suffering from side-channel attacks (\textit{e}.\textit{g}., cache-line, branch shadow, and page-fault attack~\cite{DBLP:conf/ccs/WangCPZWBTG17,DBLP:conf/uss/0001SGKKP17,DBLP:journals/corr/abs-1811-05378}) that enable the adversary to deduce information based on access-patterns in SGX. Thus, we incorporate techniques to deal with these attacks.

\subsection{Scoping the Problem}
There are other aspects, for them either solutions exist or this paper does not deal with them, as:
(\textit{i}) \noindent\textit{Key management.} We do not focus on building/improving key infrastructure for public/private keys, as well as, key generation and sharing between SGX and $\mathcal{DP}$. Further, changing the keys of encrypted data and re-encrypting the data is out of the scope of this paper, though one may use the recent approach~\cite{DBLP:conf/ccs/JareckiKR19} to do so. Also, we do not focus on SGX remote attestation.
(\textit{ii})
\noindent\textit{Man-in-the-middle (MiM) or replay attacks.} There could be a possibility of MiM and replay attacks on SGX during attestation and query execution. We do not deal with both issues, and techniques~\cite{DBLP:conf/codaspy/DharPKC20} can be used to avoid such attacks.
(\textit{iii})
\noindent\textit{Inference from the number of rows.} Since we send data in epochs, different numbers of tuples in different epochs (\textit{e}.\textit{g}., epochs for day vs night time) may reveal information about the user. This can be prevented by sending the same number of rows in each epoch (equals to the maximum rows in any epoch). The current implementation of \textsc{Concealer} does not deal with this issue.
(\textit{iv})
\noindent\textit{Inference from occupancy count.} Occupancy information mixed with background knowledge reveals the presence/absence of a person at a location (\textit{e}.\textit{g}., offices). We do not deal with these inferences, and differential privacy techniques mixed with SGX~\cite{xuhermetic,DBLP:journals/pvldb/BaterHEMR18} can be used to deal with such issues.

\begin{figure*}[!t]

  \centering
  \includegraphics[scale=0.8]{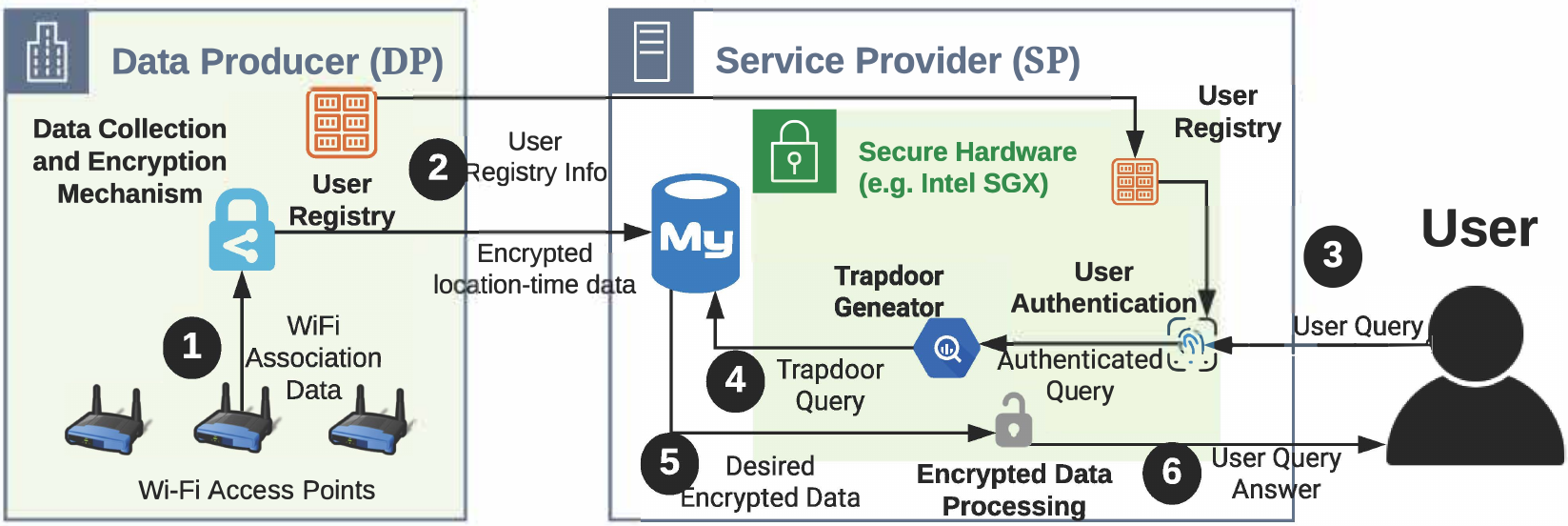}
\B
\caption{\textsc{Concealer} model.}
\BB
  \label{fig:Concealer model}
\end{figure*}

\subsection{Outline of the Paper}
The other sections of the papers are organized as follows:
\begin{enumerate}[leftmargin=0.2in]
    \item \S\ref{sec:securetimedb architecture} provides an overview of the entities involved in \textsc{Concealer}, its architecture, and a high-level description of the proposed algorithms.
\item \S\ref{sec:Data encryption and Outsourcing} provides the details of data encryption and data outsourcing algorithms.

\item \S\ref{sec:point Query Execution} and ~\S\ref{sec:Range Query} provide the details of algorithms for point queries and range queries, respectively.

\item \S\ref{sec:Insert Operation} provides the algorithm for dynamic data insertion, and~\S\ref{sec:Preventing Attacks due to Query Workload} provides an algorithm to deal with information leakages due to the query workload.

\item \S\ref{sec:Security Threats and Properties} provides the security properties satisfies by \textsc{Concealer} and discusses the information leakages due to different query execution algorithms.

\item \S\ref{sec:Experimental Evaluation} provides experimental results of \textsc{Concealer} and compares them against different cryptographic techniques.

\end{enumerate}

\section{\textsc{Concealer} Overview}
\label{sec:securetimedb architecture}
This section provides an overview of entities involved in \textsc{Concealer} and its architecture with a high-level overview of algorithms.

\subsection{Entities and Assumptions}
\label{subsec:Entities and Assumptions}
\textsc{Concealer} consists of the following three entities:

\begin{itemize}[leftmargin=0.15in]
    \item 
\textbf{Data provider} $\boldsymbol{\mathcal{DP}}$\textbf{:}  is a trusted entity that collects user's spatial time-series data as part of its regular operation (\textit{e}.\textit{g}., providing cellular service to users). ${\mathcal{DP}}$ shares such data in encrypted form with service providers ${\mathcal{SP}}$. ${\mathcal{DP}}$, also, maintains a \emph{registry}, one per ${\mathcal{SP}}$, that contains a list of identification information of users, who have registered to use the application provided by the corresponding ${\mathcal{SP}}$ (\textit{i}.\textit{e}., can run queries at that ${\mathcal{SP}}$). As will be clear,  this registry helps to restrict the users to request individualized applications about other users.


\item 
 \textbf{Service provider} $\boldsymbol{\mathcal{SP}}$\textbf{:} is an untrusted entity that develops location-based applications (as mentioned in~\S\ref{sec:Introduction}) over encrypted data. To do so, $\mathcal{SP}$ hosts  secure hardware, SGX, that works as a trusted agent of $\mathcal{DP}$.\footnote{{\scriptsize The assumption of secure hardware at untrusted third-party machines is consistent with emerging system architectures; \textit{e}.\textit{g}., Intel machines are equipped with SGX~\cite{url1}.}} SGX and $\mathcal{DP}$ share a secret key $s_k$ (used for encryption/ decryption of data), and this key is unknown to all other entities.

An untrusted $\mathcal{SP}$ may try to learn user's data passively by either observing the data retrieved by SGX or exploiting side-channel attacks on SGX during query execution. It may further learn user's data by actively injecting the fake data into the database and then observing the corresponding ciphertext and query access-patterns. We assume that $\mathcal{SP}$ knows background information, \textit{e}.\textit{g}., metadata, the schema of the relation, the number of tuples, and the domain of attributes. However, the adversarial $\mathcal{SP}$ cannot alter anything within the secure hardware and cannot decrypt the data, due to the unavailability of the encryption key. Such assumptions are similar to those considered in the past work related to SGX-based computation~\cite{DBLP:conf/nsdi/ZhengDBPGS17,DBLP:conf/sp/SchusterCFGPMR15,DBLP:conf/uss/DinhSCOZ15,DBLP:conf/sp/PriebeVC18}, work on attacks based on background knowledge in~\cite{DBLP:conf/ccs/NaveedKW15,DBLP:conf/ccs/KellarisKNO16}, and location-based services~\cite{infocom}. 

\item 
\textbf{User or data consumer} $\boldsymbol{\mathcal{U}}$\textbf{:} that uses the services of $\mathcal{DP}$ (such as cellular or WiFi connectivity) and queries to $\mathcal{SP}$. We assume that $\mathcal{U}$ have their public and private keys, which are used to authenticate $\mathcal{U}$ at $\mathcal{SP}$ (via SGX against registry). As mentioned in~\S\ref{sec:Introduction}, ${\mathcal{U}}$ can request both aggregate and individualized queries. While ${\mathcal{U}}$ is trusted with the data that corresponds to themselves, they are not trusted with data belonging to other users. Finally, we assume that while ${\mathcal{U}}$ can execute the aggregation queries, they do not collude with ${\mathcal{SP}}$, \textit{i}.\textit{e}., they do not share cleartext results of any query with ${\mathcal{SP}}$.
\end{itemize}



\bgroup
\def\arraystretch{1.4}
\begin{table*}[!t]
\BBB\BBB
    \begin{subtable}[b]{.15\linewidth}
      \centering
      \scriptsize
                \begin{tabular}{|l||l|l|l|}
                  \hline
        & $\mathcal{L}$ & $\mathcal{T}$ & $\mathcal{O}$  \\\hline
   $r_1$  & $l_1$ & $t_1$ & $o_1$ \\\hline
   $r_2$  & $l_1$ & $t_2$ & $o_2$ \\\hline
   $r_3$  & $l_2$ & $t_3$ & $o_2$ \\\hline
   $r_4$  & $l_1$ & $t_4$ & $o_1$ \\\hline
   $r_5$  & $l_2$ & $t_5$ & $o_3$ \\\hline
   $r_6$  & $l_3$ & $t_6$ & $o_2$ \\\hline
\end{tabular}
\B
\caption{A relation $R$ in cleartext at $\mathcal{DP}$.}
\label{tab:Cleartext data at the data provider}
    \end{subtable}\quad
        \begin{subtable}[b]{.37\linewidth}
    \begin{center}
    \includegraphics[scale=0.4]{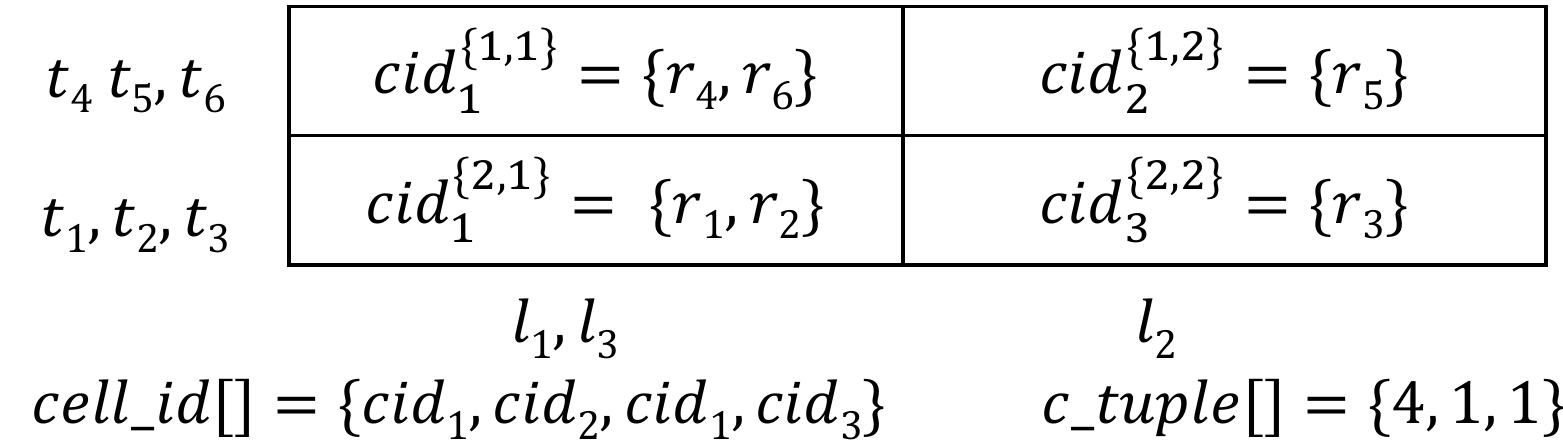}
    \end{center}
    \caption{The grid created at $\mathcal{DP}$ for rows of Table~\ref{tab:Cleartext data at the data provider}.}
\label{tab:the grid 2times2}
    \end{subtable}
    \begin{subtable}[b]{.44\linewidth}
    \begin{center}
\scriptsize
\begin{tabular}{|l||l|l|l|l|l|l}
  \hline
          & $\mathcal{L}$ &  $\mathcal{O}$ &
          Tuple &
          $\mathrm{Index}(\mathcal{L},\mathcal{T})$
          \\\hline

   $r_1^{\prime}$  & $\mathcal{E}_k(l_1||t_1)$ & $\mathcal{E}_k(o_1||t_1)$  & $\mathcal{E}_k(l_1||t_1||o_1)$ & $\mathcal{E}_k(\mathit{cid}_1||1)$   \\\hline

   $r_7^{\prime}$  & $\mathcal{E_{\mathit{nd}}}(\mathit{fake})$ & $\mathcal{E_{\mathit{nd}}}(\mathit{fake})$  & $\mathcal{E_{\mathit{nd}}}(\mathit{fake})$ & $\mathcal{E}_k(\mathit{f}||1)$   \\\hline

   $r_2^{\prime}$  & $\mathcal{E}_k(l_1||t_2)$ &  $\mathcal{E}_k(o_2||t_2)$ & $\mathcal{E}_k(l_1||t_2||o_2)$ & $\mathcal{E}_k(\mathit{cid}_1||2)$    \\\hline

   $r_3^{\prime}$  & $\mathcal{E}_k(l_2||t_3)$ &  $\mathcal{E}_k(o_2||t_3)$ & $\mathcal{E}_k(l_2||t_3||o_2)$   & $\mathcal{E}_k(\mathit{cid}_3||1)$    \\\hline

   $r_8$  & $\mathcal{E_{\mathit{nd}}}(\mathit{fake})$ & $\mathcal{E_{\mathit{nd}}}(\mathit{fake})$  & $\mathcal{E_{\mathit{nd}}}(\mathit{fake})$ & $\mathcal{E}_k(\mathit{f}||2)$   \\\hline

   $r_4^{\prime}$  & $\mathcal{E}_k(l_1||t_4)$ &  $\mathcal{E}_k(o_1||t_4)$ & $\mathcal{E}_k(l_1||t_4||o_1)$    & $\mathcal{E}_k(\mathit{cid}_1||3)$    \\\hline




  $r_5^{\prime}$  & $\mathcal{E}_k(l_2||t_5)$ &  $\mathcal{E}_k(o_3||t_5)$ & $\mathcal{E}_k(l_2||t_5||o_3)$    & $\mathcal{E}_k(\mathit{cid}_2||1)$    \\\hline


   $r_6^{\prime}$  & $\mathcal{E}_k(l_3||t_6)$ &  $\mathcal{E}_k(o_2||t_6)$ & $\mathcal{E}_k(l_3||t_6||o_2)$    & $\mathcal{E}_k(\mathit{cid}_1||4)$    \\\hline
\end{tabular}

$\mathit{Ecell\_id}[2,2]=\mathcal{E}_{\mathit{nd}}(\{\mathit{cid}_1,\mathit{cid}_2,\mathit{cid}_1,\mathit{cid}_3\})$

$\mathit{Ec\_tuple}[3]=\mathcal{E}_{\mathit{nd}}(\{4,1,1\})$
\end{center}
\BB
\caption{Encrypted data with encrypted counters at $\mathcal{SP}$.}
\label{tab:Encrypted data at the cloud}
   \end{subtable}
   \BBB\B
     \caption{Input time-series relation and output of data encryption algorithm.}
\BBB\BBB
\end{table*}
\egroup

\subsection{Architecture}
\label{subsec:Architecture}
\textsc{Concealer} (Figure~\ref{fig:Concealer model}) consists of the following phases: 

\medskip
\noindent\textbf{\textsc{Phase 0:} \emph{Preliminary step: Announcement of $\boldsymbol{\mathcal{SP}}$ by $\boldsymbol{\mathcal{DP}}$.}} As a new $\mathcal{SP}$ is added into the system, $\mathcal{DP}$ announces about the $\mathcal{SP}$ to all their users. Only interested users inform to ${\mathcal{DP}}$ if they want to use $\mathcal{SP}$'s application. Information of such users, their device-id, and authentication information is stored by ${\mathcal{DP}}$ in the registry.

\medskip
\noindent\textbf{\textsc{Phase 1:} \emph{Data upload by $\boldsymbol{\mathcal{DP}}$.}} $\mathcal{DP}$ collects spatial time-series data of the form $\langle l_i,t_i,o_i\rangle$ (\encircle{1}), where $l_i$ is the location, $t_i$ is the time, and $o_i$ is the observed value at $l_i$ and $t_i$. 
For instance, in the case of WiFi data, the location may correspond to the region covered by a specific WiFi access-point, and the observation corresponds to a particular device-id connected to that access-point at a given time. $\mathcal{DP}$ encrypts the data using the mechanism given in~\S\ref{sec:Data encryption and Outsourcing} and provides the encrypted data to $\mathcal{SP}$ (\encircle{2}) along with encrypted registry and verifiable tags (to verify the data integrity at $\mathcal{SP}$ by SGX).

\textsc{Concealer} considers the data as a relation $R$ with three attributes: $\mathcal{T}$ (time), $\mathcal{L}$ (location), and $\mathcal{O}$ (observation). 
Table~\ref{tab:Cleartext data at the data provider} shows an example of the cleartext spatial time-series data, (which will be used in this paper to explain \textsc{Concealer}). In Table~\ref{tab:Cleartext data at the data provider}, we have added a row-id $r_i$ ($1\leq i\leq 6$) to refer to individual rows of Table~\ref{tab:Cleartext data at the data provider}. Table~\ref{tab:Encrypted data at the cloud} shows an example of the encrypted spatial time-series data as the output of \textsc{Concealer}.


\medskip
\noindent\textbf{\textsc{Phase 2:} \emph{Query generation at $\boldsymbol{\mathcal{U}}$.}} A query
$Q= \langle \mathit{qa}, \mathit{att}\rangle$, where $\mathit{qa}$ is an aggregation (count, maximum, minimum, top-k, and average) or selection operation for a given condition, and $\mathit{att}$ is a set of attributes with predicates on which query will be executed, is submitted to $\mathcal{SP}$ (\encircle{3}).  $\mathit{qa}$ is always encrypted to prevent $\mathcal{SP}$ to know the query values.

\medskip
\noindent\textbf{\textsc{Phase 3:} \emph{Query processing at $\boldsymbol{\mathcal{SP}}$.}} $\mathcal{SP}$ holds encrypted spatial time-series dataset and the user query (submitted to the secure hardware SGX). SGX, first, authenticate the user, and then, translates the query into a set of appropriate secured query trapdoors to fetch the tuples from the databases (\encircle{4}). Note that since the individualized application is executed for the user itself, trapdoors are only generated if the authentication process succeeds to find that the user is wishing to know his past behavior.

The trapdoors are generated by following the methods of~\S\ref{sec:point Query Execution} for point queries or the method of~\S\ref{sec:Range Query} for range queries. On receiving encrypted tuples from the database  (\encircle{5}), the secure hardware, first, \emph{optionally} checks their integrity using verifiable tags, and if find they have not tampered, decrypts them, if necessary, and obliviously processes them to produce the final answer to the user  (\encircle{6}).

\medskip
\noindent\textbf{\textsc{Phase 4:} \emph{Answer decryption at $\boldsymbol{\mathcal{U}}$.}} On receiving the answer, $\mathcal{U}$ decrypts them.

\subsection{Algorithm Overview}
\label{subsec:High-Level Overview of Protocols}
Before going into details of \textsc{Concealer}'s data encryption and query execution algorithms, we first explain them at the high-level.

\medskip
\noindent\textbf{Data encryption method at $\boldsymbol{\mathcal{DP}}$:}  partitions the time into slots, called \emph{epochs}, and for each epoch, it executes the encryption method that consists of the following three stages:

\smallskip
\noindent
\underline{\textsc{Stage} 1: \textit{Setup.}} Assume that we want to deal with two attributes ($A$ and $B$), (\textit{e}.\textit{g}., location and time). This stage:
(\textit{i}) creates a grid of size, say $x\times y$,
(\textit{ii}) sub-partitions the time into $y$ subintervals, \textit{e}.\textit{g}., for an epoch of 9-10am, creates $y$ subintervals as: 9:00-9:10, 9:11-9:20, and so on, and
(\textit{iii}) using a hash function, say $\mathbb{H}$, allocates $x$ values of $A$ attributes over $x$ columns, allocates $y$ values (or $y$ subintervals) of $B$ attribute to $y$ rows, and allocates some \textbf{\emph{cell-ids}} $< x\times y$ (each with their \textbf{\emph{counters}} initialized to zero) over the grid cells.{ (Such grid-creation steps can be used for more than two columns trivially and extended for non-time-series dataset.)}

\smallskip
\noindent
\underline{\textsc{Stage} 2: \textit{Encryption:}} In this stage, each sensor reading is encrypted and a \emph{verifiable tag} is produced for integrity verification, as:
(\textit{i}) a tuple $t_i$ is allocated to a grid cell corresponding to its desired column (\textit{e}.\textit{g}., location and time) values using a \emph{hash function}, the counter value of the cell-id is increased by one and attached with the tuples, and the tuples is encrypted to produce secure ciphertext with the encrypted counter value as a new attribute value,
(\textit{ii}) a hash-chain is created over the encrypted tuple values of the same cell-id for integrity verification (and verify false data injection or data deletion by $\mathcal{SP}$, and
(\textit{iii}) encrypted fake tuples are added (to prevent output-size leakage at $\mathcal{SP}$).

\smallskip
\noindent
\underline{\textsc{Stage} 3: \textit{Sharing:}} This stage sends encrypted real and fake tuples with encrypted verifiable tags and encrypted cell-id, counter information to $\mathcal{SP}$.

\smallskip
\emph{\textbf{Example.}} Table~\ref{tab:Cleartext data at the data provider} shows six cleartext rows of an epoch. A $2\times 2$ grid with three cell-ids $\mathit{cid}_1$, $\mathit{cid}_2$, and $\mathit{cid}_3$ is shown in Table~\ref{tab:the grid 2times2}. Six cleartext rows are distributed over different cells of the grid. Table~\ref{tab:Encrypted data at the cloud} shows the output of the encryption algorithm with fake tuples to prevent the output-size attack at $\mathcal{SP}$ and an index column created over the cell-ids. Encrypted Table~\ref{tab:Encrypted data at the cloud} with counters and cell-ids (written below Table~\ref{tab:the grid 2times2}) in encrypted form is given to $\mathcal{SP}$.  \textbf{\textit{Details of the encryption method and example will be given in~\S\ref{sec:Data encryption and Outsourcing}.}}$\blacksquare$





\medskip
\noindent\textbf{Data insertion into DBMS at $\boldsymbol{\mathcal{SP}}$:} On receiving the encrypted data from $\mathcal{DP}$, $\mathcal{SP}$ inserts the data into DBMS that creates/modifies the \emph{index based on the counters} associated with each tuple. 

\medskip
\noindent\textbf{Query execution at $\boldsymbol{\mathcal{SP}}$:} 
%
as a pre-processing, the enclave at $\mathcal{SP}$, first, authenticates the user, as mentioned in \textsc{Phase} 3 of~\S\ref{subsec:High-Level Overview of Protocols}, and then, executes the query, as follows:

\smallskip
\noindent
\textit{\underline{Point queries.}} Consider a query on a location $l$ and time $t$. For answering this, the enclave at $\mathcal{SP}$ executes the following steps:
(\textit{i}) first execute the hash function $\mathbb{H}$ on query predicate $l$ and $t$ to know the cell-id, say $\mathit{cid}_z$, that was allocated by $\mathcal{DP}$ to $l$ and $t$,
(\textit{ii}) using the information of cell-id and counter information, which was sent by $\mathcal{DP}$, create \emph{{static bins of a fixed size}} (to prevent output-size leakage),
(\textit{iii}) among the created bins, find a bin, say $B_i$ that has the cell-id $\mathit{cid}_z$ that was obtained in the first step above, and
(\textit{iv}) fetch data from DBMS corresponding to the bin $B_i$, and
(\textit{v}) verify the integrity of data (if needed), \emph{obliviously} process the data against query predicate in the enclave, and decrypt only the desired data.

\smallskip
\noindent
\textit{\underline{Range queries.}} A range query, of course, can be executed by following the above point query method by converting the range query into several point queries. However, to avoid the overhead of several point queries, we create static bins of fixed size over the fixed-sized groups of subintervals and fetch such bins to answer the query by following point queries' step (v). Following\textbf{\textit{~\S\ref{sec:Data encryption and Outsourcing},\S\ref{sec:point Query Execution},\S\ref{sec:Range Query} will describe these algorithms in details}}, and then~\S\ref{sec:Experimental Evaluation} will compare these algorithms on different datasets and against different systems.

\emph{\textbf{Example.}} Underlying DBMS at $\mathcal{SP}$ creates an index over $\mathrm{Index}$ column of Table~\ref{tab:Encrypted data at the cloud}. SGX creates two bins over cell-ids's as $B_1:\langle \mathit{cid}_1\rangle$, $B_2:\langle \mathit{cid}_2, f||1, f||2\rangle$. Note that both bins corresponds to four rows---$B_1$ will fetch $r_1,r_2,r_4,r_6$, and $B_2$ will fetch $r_3,r_5,f||1,f||2$ rows. Thus, the output size will be the same. Now, consider a query $Q=\langle$count, $(l_2,t_5)\rangle$ over Table~\ref{tab:Encrypted data at the cloud}. Here, SGX will know that it needs to fetch rows corresponding to the bin having  $\mathit{cid}_2$, by generating four trapdoors:  $\mathcal{E}_k(\mathit{cid}_2||1)$, $\mathcal{E}_k(\mathit{cid}_3||1)$, $\mathcal{E}_k(f||1)$, and $\mathcal{E}_k(f||2)$. Finally, based on the retrieve rows, SGX produces the final answer.$\blacksquare$

\section{Data Encryption at Data Provider}
\label{sec:Data encryption and Outsourcing}

\textsc{Concealer} stores data in discredited time slots, called \emph{epochs} or \emph{rounds}. Epoch duration is selected based upon the latency requirements of $\mathcal{SP}$. Executing queries over multiple epochs could lead to inference attacks, and for dealing with it, we will present a method in~ \S\ref{sec:Insert Operation}. This section describes Algorithm~\ref{alg:Data encryption algorithm}, which is executed at $\mathcal{DP}$, for encrypting time-series data (assuming with three attributes location $\mathcal{L}$, time $\mathcal{T}$, and object $\mathcal{O}$) belonging to one epoch. ({{In our experiments~\S\ref{sec:Experimental Evaluation}, we will consider different datasets with multiple columns.}})
Algorithm~\ref{alg:Data encryption algorithm} uses deterministic encryption (DET) to support fast query execution. Since DET produces the same ciphertext for more than one occurrence of the same location and object, to ensure ciphertext indistinguishability, we concatenate each occurrence of the location and observation values with the corresponding timestamp. 

In \textsc{Concealer}, queries retrieve a subset of tuples based on predicates specified over attributes, such as $\mathcal{L}$, $\mathcal{O}$, or both. Queries, further, are always associated with ranges over time (see Table~\ref{tab:Sample queries}). Thus, to support the efficient execution of such queries, \textsc{Concealer} creates a cell-based index over query attributes (\textit{e}.\textit{g}., $\mathcal{L}$ and/or $\mathcal{O}$) along with time. For simplicity, Algorithm~\ref{alg:Data encryption algorithm}  illustrates how a cell-based index is created for  location and time attributes,  $\mathrm{Index}(\mathcal{L},\mathcal{T})$.  Similar indexes can also be created for other attributes, such as $\mathrm{Index}(\mathcal{O},\mathcal{T})$ and $\mathrm{Index}(\mathcal{L},\mathcal{O},\mathcal{T})$. Details of Algorithm~\ref{alg:Data encryption algorithm} is given below:

\medskip
\noindent
\textbf{Key generation (Lines~\ref{ln:key_gen}).} Since using a single key over multiple epochs will result in the identical ciphertext of a value, \textsc{Concealer} produces a key for encryption for each epoch, as $k\leftarrow s_k||\mathit{eid}$, where $s_k$ is the secure key shared between SGX and $\mathcal{DP}$, $\mathit{eid}$ is the epoch-id, which is the starting timestamp of the epoch, and $||$ denotes concatenation. Thus, encrypting a value $v$ using $k$ in two different epochs will produce different ciphertexts. (Only the first $\mathit{eid}$ and epoch duration is provided to SGX to generate other $\mathit{eid}$ to decrypt the data during query execution.)

\medskip
\noindent
\textbf{Tuple encryption (Lines~\ref{ln:function_encrypt_data}-\ref{ln:return_answer}).}
As the tuple arrives, it got appropriately encrypted (Line~\ref{ln:et}) using DET. Note that by encryption over the concatenated time with location and object values, results in a unique value in the entire relation. Now, in order to allocate the cell value to be used as the index, we proceed as follows: Let $|\mathcal{L}|$ be the number of locations and $|\mathcal{T}|$ be the duration of the epoch. \textsc{Concealer} maps the set of location $|\mathcal{L}|$ into a range of values from $1$ to $x \leq |\mathcal{L}|$ using a simple hash function. It, furthermore, partitions $|\mathcal{T}|$ into $y>1$ subintervals of duration, $|\mathcal{T}|/y$, which are then mapped using a hash function into $y > 1$ values. Thus, all tuples of the epoch are distributed randomly over the grid of $x\times y$ (see Example~\ref{sec:Data encryption and Outsourcing} below). Then, $u$ cell-ids ($u<x\times y$) are allocated to grid cells. To refer to the cell-id of a cell, we use the notation $\mathit{cid}_z^{\{p,q\}}$ that shows that the cell $\{p,q\}$ is assigned a cell-id $\mathit{cid}_z$. In \textsc{Step} 3 of query execution~ \S\ref{subsec:Bin-Packing-based BPB Method Query Execution}, it will be clear that we will fetch tuples to answer any query based on cell-ids, instead of directly using query predicates.

Here, we keep two vectors: (\textit{i}) $\mathit{cell\_id}$ of length $x\times y$ to keep the cell-id allocated to each cell of the grid, and (\textit{ii}) $\mathit{c\_tuple}$ of length $u$ to store the number of tuples that have been allocated the same cell-id. During processing a $j^{\mathit{th}}$ tuple, we increment the current counter value of the number of tuples that have the same cell-id by one using $\mathit{c\_tuple}$, and encrypts it. This value will be allocated to $\mathrm{Index}(\mathcal{L}, \mathcal{T})\rangle$ attribute of the $j^{\mathit{th}}$ (Lines~\ref{ln:find_counter}-\ref{ln:encrypt_counter}).

\LinesNotNumbered \begin{algorithm}[!t]
\footnotesize
\textbf{Inputs:} $R$: a relation. $\mathbb{H}$: A hash function. $\mathcal{E}()$: An encryption function. $s_k$: a secret key.

\textbf{Outputs:} $E(R)$: the encrypted relation.

\nl \textbf{Variables:} $\forall c_t \leftarrow 0$, where $1\leq t\leq r$. $x \leftarrow \# \mathbb{H}(\mathit{Domain}(\mathcal{L}))$,
$y \leftarrow \# \mathbb{H}(\mathit{Domain}(\mathcal{T}))$, $\mathit{cell\_id}[x,y]\leftarrow 0$, $\mathit{c\_tuple}[u]\leftarrow 0$. \nllabel{ln:variable_init}


\nl{\bf Function $\boldsymbol{\mathit{key\_gen(s_k)}}$} \nllabel{ln:key_gen}
\Begin{
\nl $k\leftarrow (s_k ||\mathit{eid})$
}
\nl{\bf Function $\boldsymbol{\mathit{encrypt\_data(R)}}$} \nllabel{ln:function_encrypt_data}
\Begin{

\nl \For{$j\in (0,n-1)$}{

\nl $\mathit{Eo}_j \leftarrow \mathcal{E}_k(o_j||t_j)$, \nllabel{ln:eid}
$\mathit{El}_j \leftarrow \mathcal{E}_k(\mathit{l}_j||t_j)$,
$\mathit{Er}_j \leftarrow \mathcal{E}_k(v_j||l_j||t_j)$ \nllabel{ln:et}

\nl {\bf Function $\boldsymbol{\mathit{Cell\textnormal{-}Formation}(j^\mathit{th}\textnormal{  tuple})}$} \nllabel{ln:function_cell_formation}
\Begin{

\nl$p\leftarrow \mathbb{H}(l_j)$, $q\leftarrow \mathbb{H}(t_j)$, $\mathit{cid}_z^{\{p,q\}} \leftarrow \mathit{cell\_id}[p,q]$ \nllabel{ln:find_cell}


\nl$c_t\leftarrow \mathit{c\_tuple}[\mathit{cid}_z^{\{p,q\}}] \leftarrow \mathit{c\_tuple}[\mathit{cid}_z^{\{p,q\}}] + 1$ \nllabel{ln:find_counter}

\nl$\mathit{Ec}_j \leftarrow E_k(\mathit{cid}_z^{\{p,q\}}||c_t)$ \nllabel{ln:encrypt_counter}
}
\nl \Return $E(R)\leftarrow \langle  \mathit{Eo}_j,\mathit{El}_j,\mathit{Er}_j, \mathit{Ec}_j\rangle$ \nllabel{ln:return_answer}
}
}

\nl{\bf Function $\boldsymbol{\mathit{add\_fake\_tuples()}}$} \nllabel{ln:function_add_fake_rows}
\Begin{
\nl \For{$j\in (0,n-1)$}{

\nl Generate fake $\mathit{Eo}_j$, $\mathit{El}_j$, and $\mathit{Er}_j$, and  \nllabel{ln:generate_fake_rows}
$\mathit{Ec}_j \leftarrow \mathcal{E}_k(f||j)$ 

\nl Append the $j^{\mathit{th}}$ fake tuple to the relation $E(R)$ \nllabel{ln:append_rows}
}
}

\nl{\bf Function $\boldsymbol{\mathit{HashChain}(\mathit{c\_tuple}[u],\langle \mathit{Eo},\mathit{El},\mathit{Er}\rangle)}$} \nllabel{ln:function_hash_chain}
\Begin{

\nl \For{$j\in c\_tuple[]$, $\forall p$ tuples with same cell-id}{

\nl
$h_l^j\leftarrow H( El_p)|| (H(El_{p-1}) || (\ldots||(H(El_2)||H(El_1))) \ldots ) ))$

\nl
$h_o^j\leftarrow H( Eo_p)|| (H(Eo_{p-1}) || (\ldots||(H(Eo_2)||H(Eo_1))) \ldots ) ))$

\nl
$h_r^j\leftarrow H( Er_p)|| (H(Er_{p-1}) || (\ldots||(H(Er_2)||H(Er_1))) \ldots )))$

\nl $Ehl^j \leftarrow E(h_l^j)$, $Eho^j \leftarrow E(o_l^j)$, $Ehr^j \leftarrow E(h_r^j)$ \nllabel{ln:hash_function_end}
}}

\nl{\bf Function $\boldsymbol{\mathit{Transmit}( E(R),\mathit{cell\_id}[x,y],\mathit{c\_tuple}[u])}$} \nllabel{ln:function_outsource}
\Begin{

\nl $\mathit{Ecell\_id}[x,y] \leftarrow \mathcal{E}_{\mathit{nd}}(\mathit{cell\_id}[x,y])$,  
 $\mathit{Ec\_tuple}[u] \leftarrow \mathcal{E}_{\mathit{nd}}(\mathit{c\_tuple}[u])$ \nllabel{ln:encrypt_cc}

\nl Permute all the tuples of the encrypted relation $E(R)$

\nl Send $E(R)$, $\mathit{Ecell\_id}[x,y]$, $\mathit{Ec\_tuple}[u]$,
$Ehl^j$, $Eho^j$, $Ehr^j$ \nllabel{ln:outsource_end}
}
\caption{Data encryption algorithm.}
\label{alg:Data encryption algorithm}
\end{algorithm}
\setlength{\textfloatsep}{0pt}

\medskip
\noindent
\textbf{Allocating fake tuples (Lines~\ref{ln:function_add_fake_rows} -\ref{ln:append_rows}).} Since $\mathcal{SP}$ will read the data from DBMS into the enclave, different numbers of rows according to different queries may reveal information about the encrypted data. Thus, to fetch an equal number of rows for any query, $\mathcal{DP}$ needs to share some fake rows. There are two methods for adding the fake rows:

\noindent (\textit{i}) \emph{Equal number of real and fake rows}: This is the simplest method for adding the fake rows. Here, $\mathcal{DP}$ adds ciphertext secure fake tuples. The reason of adding the same number of real and fake rows is dependent on the property of the bin-packing algorithm, which we will explain in~\S\ref{subsec:Bin-Packing-based BPB Method Query Execution} (Theorem~\ref{th:our_bounds}).\footnote{{\scriptsize
For $n$ real tuples, we add a little bit more than $n$ fake tuples in the worst case (Theorem~\ref{th:our_bounds}).}} In $\mathrm{Index}$ attribute, a $j^{\mathit{th}}$ fake tuple contains an encrypted identifier with the tuple-id $j$, denoted by $\mathcal{E}_k(f||j)$, where $f$ is an identifier (known to only $\mathcal{DP}$) to distinguish real and fake tuples.

\noindent (\textit{ii}) \emph{By simulating the bin-creation method}: To reduce the number of fake rows to be sent, we use this method in which $\mathcal{DP}$ simulates the bin-packing algorithm (as will be explained in~\S\ref{subsec:Bin-Packing-based BPB Method Query Execution}) and finds the total number of fake rows required in all bins such that their sizes must be identical. Then, $\mathcal{DP}$ share such ciphertext secure fake tuples with their $\mathrm{Index}$ values, as in the previous method. As will be clear soon by Theorem~\ref{th:our_bounds} in~\S\ref{subsec:bin creation}, in the worst case, both the fake tuple generation methods send the same number of fake tuples, \textit{i}.\textit{e}., an equal number of real and fake tuples.

\medskip
\noindent
\textbf{Hash-chain creations (an optional step) Line~\ref{ln:function_hash_chain}-\ref{ln:hash_function_end}.} $\mathcal{DP}$ creates hash chains over encrypted tuples allocated an identical cell-id, as follows: let $p$ be the numbers of tuples allocated the same cell-ids. Consider $p$ encrypted location ciphertext as: $E(l_1), E(l_2),\ldots, E(l_p)$. Now, $\mathcal{DP}$ executes a hash function as follows:

\centerline{
$h_{l1}\leftarrow H(E(l_1))$}
\centerline{
$h_{l2}\leftarrow H(E(l_2)||h_{l1})$}
\centerline{$\ldots$}
\centerline{
$h_{l}\leftarrow H(E(l_p)||h_{l(p-1)})$}
In the same way, hash digests for other columns are computed, and the final hash digest (\textit{i}.\textit{e}., $h_l$) is encrypted that works as a verifiable tag at $\mathcal{SP}$.

\medskip
\noindent\textbf{Sending data (Line~\ref{ln:function_outsource}-\ref{ln:outsource_end}).} Finally, $\mathcal{DP}$ permutes all encrypted tuples of the epoch to mix fake and real tuples in the relation and sends them with the two encrypted vectors $\mathit{Ecell\_id}[]$ and $\mathit{Ec\_tuple}[]$ and encrypted hash digests.\footnote{{\scriptsize The size of both vectors is significantly smaller (see experimental section~ \S\ref{subsec:Setup and Datasets}).}}


\medskip
\noindent\textbf{Example~\ref{sec:Data encryption and Outsourcing}.}
Now, we explain with the help of an example how encryption algorithm works. Consider six rows of Table~\ref{tab:Cleartext data at the data provider} as the rows of an epoch, and we wish to encrypt those tuples with index on attributes $\mathcal{L}$ and $\mathcal{T}$. Assume that Algorithm~\ref{alg:Data encryption algorithm} creates a $2\times 2$ grid (see Table~\ref{tab:the grid 2times2}) with three cell-ids: $\mathit{cid}_1$, $\mathit{cid}_2$, and $\mathit{cid}_3$. Table~\ref{tab:the grid 2times2} shows two vectors $\mathit{cell\_id}[]$ and $\mathit{c\_tuple}[]$ corresponding to $\mathcal{L}$ and $\mathcal{T}$ attributes. Values in $\mathit{c\_tuple}[]$ show that the number of tuples has been allocated the same cell-id. For instance, $\mathit{c\_tuple}[1]=4$ shows that four tuples are allocated the same cell-id (\textit{i}.\textit{e}., $\mathit{cid}_1$). In Table~\ref{tab:the grid 2times2}, for explanation purposes, we show which rows of Table~\ref{tab:Cleartext data at the data provider} correspond to which cell; however, this information is not stored, only information of vectors $\mathit{cell\_id}[]$ and $\mathit{c\_tuple}[]$ is stored.

The complete output of Algorithm~\ref{alg:Data encryption algorithm} is shown in Table~\ref{tab:Encrypted data at the cloud} for cleartext data shown in Table~\ref{tab:Cleartext data at the data provider}, where $\mathrm{Index}(\mathcal{L},\mathcal{T})$ is the column on which DBMS creates an index. In Table~\ref{tab:Encrypted data at the cloud}, $\mathcal{E}$ refers to DET, $\mathcal{E}_{\mathit{nd}}$ refers to a non-deterministic encryption function, and $k$ be the key used to encrypt the data of the epoch. In addition, we create three hash chains, one hash chain per cell-id. Also, note that this example needs only 2 fake tuples to prevent output-size leakage at $\mathcal{SP}$.$\blacksquare$

\section{Point Query Execution}
\label{sec:point Query Execution}

This section develops a bin-packing-based (BPB) method for executing point queries. Later,~\S\ref{sec:Range Query} will develop a method for range queries. The objectives of BPB method are twofold: first, create identical-size bins to prevent leakages due to output-size, (\textit{i}.\textit{e}., when reading some parts of the data from disk to the enclave), and second, show that the addition of \emph{at most} $n$ fake tuples is enough in the worst case to prevent output-size leakage, where $n$ is the number of real tuples. BPB method partitions the values of $\mathit{c\_tuple}[]$ into almost equal-sized bins, using which a query can be executed. Note that bins are created only once, prior to the first query execution. This section, first, presents the bin-creation method, and then, BPB query execution method.


\subsection{Bin Creation}
\label{subsec:bin creation}
Bins are created inside the enclave using a bin-packing algorithm, after decrypting vector $\mathit{Ec\_tuple}[]$.

\medskip
\noindent\textit{\underline{Bin-packing algorithms.}} A bin-packing algorithm places the given inputs having different sizes to bins of size at least as big as the size of the largest input, without partitioning an input, while tries to use the minimum number of bins. First-Fit Decreasing (FFD) and Best-Fit Decreasing (BFD)~\cite{Coffman:1996:AAB:241938.241940} are the most notable bin-packing algorithms and \emph{ensure that all the bins (except only one bin) are at least half-full.} 


In our context, $u$ cell-ids ($\mathit{cid}_1,\mathit{cid}_2, \ldots, \mathit{cid}_u$) are inputs to a bin-packing algorithm, and the number of tuples having the same cell-id is considered as a weight of the input. Let $\mathit{max}$ be the maximum number of tuples having the same cell-id $\mathit{cid}_i$. Thus, we create bins of size at least $|b|=\mathit{max}$ and execute FFD or BFD over $u$ different cell-ids, resulting in $|\mathit{Bin}|$ bins as an output of the bin-packing algorithm.

\medskip\noindent
\textbf{The minimum number of bins.} Let $n$ be the number of real tuples sent by $\mathcal{DP}$, \textit{i}.\textit{e}., $n=\sum_{i=1}^{i=u}\mathit{c\_tuple}[i]$. Let $|b|$ be the size of each bin. Thus, it is required to divide $n$ inputs into at least $\lceil n/|b|\rceil$ bins.

\begin{theorem}
\label{th:our_bounds}
\textnormal{\textsc{(Upper bounds on the number of bins and fake tuples)}} The above bin-packing method using a bin size $|b|$ achieves the following upper bounds: the number of bins and the number of fake tuples sent by $\mathcal{DP}$ are at most $\frac{2n}{|b|}$ and at most $n+\frac{|b|}{2}$, respectively, where $n\gg |b|$ is the number of real tuples sent by $\mathcal{DP}$.
\end{theorem}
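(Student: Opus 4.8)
The proof will rest entirely on the classical guarantee of First-Fit Decreasing / Best-Fit Decreasing bin packing recalled above: \emph{every bin except at most one is at least half-full}. Fix the bin size $|b| = \mathit{max}$. First I would bound the number of bins. Let $|\mathit{Bin}|$ be the number of bins produced. Since all but one bin contain strictly more than $|b|/2$ real tuples, summing over the bins gives $n = \sum_{i=1}^{i=u}\mathit{c\_tuple}[i] > (|\mathit{Bin}|-1)\cdot \frac{|b|}{2}$, hence $|\mathit{Bin}| < \frac{2n}{|b|} + 1$. Using $n \gg |b|$ (so that the additive $+1$ is absorbed, or more precisely $\frac{2n}{|b|}$ is already an integer-scale quantity dominating $1$), this yields $|\mathit{Bin}| \le \frac{2n}{|b|}$, the first claimed bound.

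For the second bound, the number of fake tuples is exactly the total amount of ``slack'' needed to pad every bin up to the common size $|b|$, i.e. $\#\text{fake} = |\mathit{Bin}|\cdot|b| - n$. Substituting the bound $|\mathit{Bin}| \le \frac{2n}{|b|}$ would only give $\#\text{fake} \le n$, which is weaker than the stated $n + \frac{|b|}{2}$ in the wrong direction — so instead I would argue directly. Each of the $|\mathit{Bin}|-1$ ``at least half-full'' bins needs at most $|b| - \frac{|b|}{2} = \frac{|b|}{2}$ fake tuples to be filled, and the single possibly-underfull bin needs at most $|b|-1 < |b|$ fake tuples. But a cleaner accounting: the padding in the $j$-th bin is $|b| - s_j$ where $s_j$ is its real load; $\sum_j s_j = n$; for all but one bin $s_j > |b|/2$ so its padding is $< |b|/2$; the exceptional bin contributes padding $< |b|$. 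Hence $\#\text{fake} < (|\mathit{Bin}|-1)\cdot\frac{|b|}{2} + |b|$. Combining with $(|\mathit{Bin}|-1)\cdot\frac{|b|}{2} < n$ from the first part gives $\#\text{fake} < n + |b| $; tightening the exceptional-bin term to its true worst case $\le |b|/2$ additional padding beyond the half-full threshold — i.e. rolling the exceptional bin into the same $\le |b|/2$-slack bookkeeping plus one extra $|b|/2$ correction — yields $\#\text{fake} \le n + \frac{|b|}{2}$, as claimed.

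The main obstacle, and the place where the argument needs care rather than routine manipulation, is the treatment of the single exceptional bin in the FFD/BFD guarantee. A naive bound charges it a full $|b|$ of padding, which overshoots $n + |b|/2$; getting down to $n + |b|/2$ requires observing that even the exceptional bin is nonempty (it holds at least one real tuple, so its padding is at most $|b|-1$), and then arguing that the ``$-1$'' together with the strict inequalities $s_j > |b|/2$ for the other bins closes the gap. I would also need to state precisely how the hypothesis $n \gg |b|$ is used to pass from the strict bounds with $+1$ additive terms to the clean bounds $\frac{2n}{|b|}$ and $n + \frac{|b|}{2}$; morally, $n \gg |b|$ guarantees $\frac{2n}{|b|} \ge 2$ and that rounding/additive constants are dominated, so the stated bounds hold as written. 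Everything else — the identity $\#\text{fake} = |\mathit{Bin}|\cdot|b| - n$ and the summation $n = \sum \mathit{c\_tuple}[i]$ — is immediate from the construction in Algorithm~\ref{alg:Data encryption algorithm}.
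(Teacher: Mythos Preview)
Your approach is essentially the same as the paper's: both proofs invoke the FFD/BFD ``all bins but one are at least half-full'' guarantee, first bound $|\mathit{Bin}|$ by summing the half-full loads against $n$, and then bound the padding by charging at most $|b|/2$ slack to each of the non-exceptional bins and a larger slack to the single exceptional bin. One small slip: you say that $\#\text{fake}\le n$ (obtained by plugging $|\mathit{Bin}|\le 2n/|b|$ into $\#\text{fake}=|\mathit{Bin}|\cdot|b|-n$) is ``weaker than $n+|b|/2$ in the wrong direction'' --- it is actually \emph{stronger}, and the only reason you cannot stop there is that the clean bound $|\mathit{Bin}|\le 2n/|b|$ already quietly absorbed a $+1$ via $n\gg|b|$; the paper is equally informal on this point, so your more careful treatment of the exceptional bin is, if anything, an improvement.
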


\begin{proof}
A bin $b_i$ can hold inputs whose sum of the sizes is at most $|b|$. Since the FFD or BFD bin-packing algorithm ensures that all the bins (except only one bin) are at least half-full, each bin of size $|b|$ has at least all those cell-ids whose associated number of tuples is at least ${|b|}/{2}$. Thus, all $n$ real tuples can be placed in at most ${n}/({|b|/2})$ bins, each of size $|b|$. Further, since all such ${2n}/{|b|}$ bins are at least half-full, except the last one, we need at most $n+({|b|}/{2})$ more tuples to have all the bins with $|b|$ tuples. Thus, $\mathcal{DP}$ sends at most $n+({|b|}/{2})\approx n$ fake tuples with $n$ real tuples. 
\end{proof}


\medskip\noindent
\textbf{Equi-sized bins.} The output bins of FFD/BFD may have different numbers of tuples. 
Thus, we pad each bin with fake tuples, thereby all bins have $|b|$ tuples.
Let $\mathit{tuple}_{b_i}< |b|$ be the number of tuples assigned to an $i^{\mathit{th}}$ bin (denoted by $b_i$). Here, the ids (\textit{i}.\textit{e}., the value of $\mathrm{Index}$ column) of fake tuples allocated to the bin $b_i$ will be $|b|-\mathit{tuple}_{b_i}$, and all these fake tuple ids cannot be used for padding in any other bin. Thus, for padding, we create disjoint sets of fake tuple ids (see the example below to understand the reason).

\medskip
\noindent
\textbf{Example~\ref{sec:point Query Execution}.1.}
Assume five cell-ids $\mathit{cid}_1,\mathit{cid}_2,\ldots, \mathit{cid}_5$ having the following number of tuples $\mathit{c\_tuple}[5]=\{79,2,73,7,7\}$. 
Here, $\mathit{cid}_1$ has the maximum number of tuples; hence, the bin-size is at least 79. After executing FFD bin-packing algorithm, we obtain three bins, each of size 79: $b_1$: $\langle \mathit{cid}_1\rangle$, $b_2$: $\langle \mathit{cid}_3, \mathit{cid}_2\rangle$, and $b_3$: $\langle \mathit{cid}_5, \mathit{cid}_4\rangle$. Here, bins $b_2$ and $b_3$ needs 4 and 65 fake tuples, respectively. One can think of sending only 65 fake tuples to access bins $b_2$ and $b_3$ to have size 79. However, in the absence of access-pattern hiding techniques, the adversary will observe that any 4 tuples out of 65 fake tuples are accessed in both bins. It will reveal that these four tuples are surely fake, and thus, the adversary may deduce that the bin size of $b_2$ is 75. Thus, $\mathcal{DP}$ needs to send 69 fake tuples in this example.$\blacksquare$



\subsection{Bin-Packing-based (BPB) Query Execution}
\label{subsec:Bin-Packing-based BPB Method Query Execution}
We present BPB method (see pseudocode in Algorithm~\ref{alg:Query execution}) based on the created bins (over location and time attributes). A similar method can be extended for other attributes. BPB method contains the following four steps:

\medskip
\noindent\textbf{\textsc{Step} 0: Bin-creation.} By following FFD or BFD as described above, this step creates bins over cell-ids ($\mathit{c\_tuple}[]$), if bins do not exist.

\medskip
\noindent\textbf{\textsc{Step} 1: Cell identification (Lines~\ref{ln:function_find_cell}-\ref{ln:return_cell_id}).} 
The objective of this step is to find a cell of the grid corresponding to the requested location and time. A query $Q_e=\langle \mathit{qa}, (\mathcal{L}=l,\mathcal{T}=t)\rangle$ is submitted to the enclave that, on the query predicates $l$ and $t$, applies the hash function $\mathbb{H}$, which was also used by $\mathcal{DP}$ (in $\mathit{Cell\textnormal{-}Formation}$ function, Line~\ref{ln:find_cell} of Algorithm~\ref{alg:Data encryption algorithm}). Thus, the enclave knows the cell, say $\{p,q\}$, corresponds to $l$ and $t$. Based on the cell $\{p,q\}$ and using
the vector $\mathit{cell\_id}[]$, it knows the cell-id, say $\mathit{cid}_z$, allocated to the cell $\{p,q\}$.

\medskip
\noindent\textbf{\textsc{Step} 2: Bin identification (Lines~\ref{ln:function_find_bin}-\ref{ln:find_bin_done}).}
Based on the output of \textsc{Step} 1, \textit{i}.\textit{e}., the cell-id $\mathit{cid}_z$, this step finds a bin $b_i$ that contains $\mathit{cid}_z$. Bin $b_i$ may contain several other cell-ids along with identities of the first and the last fake tuples required for $b_i$.

\medskip
\noindent\textbf{\textsc{Step} 3: Query formulation (Lines~\ref{ln:function_generate_queries}-\ref{ln:get_queries}).} After knowing all cell-ids that are required to be fetched for bin $b_i$, the enclave formulates appropriate ciphertexts that are used as queries. Let the set of cell-ids in $b_i$ be $C_{1},C_2,\ldots,C_{\alpha}$, containing $\#_{1}, \#_{2}, \ldots, \#_{\alpha}$ records, respectively. Let the fake tuple range for $b_i$ be $f_l$ and $f_h$ (let $\#_f = f_h - f_l$ be the number of fake tuples that have to be retrieved for $b_i$). The enclave generates $\#_i$ number of queries, as: $\mathcal{E}_k(C_p||j)$, where $1 \leq j \leq \#_i$ for each cell $C_p$ corresponding to $b_i$ and $k$ is the key obtained by concatenating $s_k$ and epoch-id (as mentioned in Line~\ref{ln:key_gen} of Algorithm~\ref{alg:Data encryption algorithm}). Also, it generates $\#_f$ fake queries, one for each of the fake tuples associated with $b_i$.

\noindent
\textit{Advantage of cell-ids.} Now, observe that a bin may contain several locations and time values (or any desired attribute value). Fetching data using cell-id does not need to maintain fine-grain information about the number of tuples per location per time.

\LinesNotNumbered
\begin{algorithm}[!t]

\footnotesize
\textbf{Inputs:}
$\langle\mathit{qa},(\mathcal{L}=l, \mathcal{T}=t)\rangle$: a query $qa$ involving predicates over $\mathcal{L}$ and $\mathcal{T}$ attributes. $\mathit{cell\_id}[x,y]$, $\mathit{c\_tuple}[u]$, $\mathbb{H}$: A hash function, $\mathcal{E}_k()$: An encryption function using a key $k$

$|\mathit{Bin}|$: the number of bins. $\mathit{b}[i][j]$: $i^{\mathit{th}}$ bin having $j$ cell-ids, where $j>0$.

\textbf{Outputs:} A set of ciphertext queries.



\nl{\bf Function $\boldsymbol{\mathit{Query\_Execution(\langle\mathit{qa},(\mathcal{L}=l, \mathcal{T}=t)\rangle)}}$} \nllabel{ln:function_query_execution}
\Begin{

\nl{\bf Function $\boldsymbol{\mathit{Find\_cell(l,t)}}$} \nllabel{ln:function_find_cell}
\Begin{

\nl $p \leftarrow \mathbb{H}(l)$, $q \leftarrow \mathbb{H}(t)$,  
$\mathit{cid}_z \leftarrow \mathit{cell\_id}[p,q]$ \nllabel{ln:find_cell_id_query_exe}

\nl \Return $\mathit{cid}_z$; \textbf{break}  \nllabel{ln:return_cell_id}
}

\nl{\bf Function $\boldsymbol{\mathit{Find\_bin(\mathit{cid}_z, |\mathit{Bin}|, b[\ast][\ast])}}$} \nllabel{ln:function_find_bin}
\Begin{
\nl \For{$\mathit{desired} \in (0,|\mathit{Bin}|-1)$\nllabel{ln:check_each_sbucket}}{\nl \If{$ \mathit{cid}_z\in \mathit{b}[\mathit{desired}][\ast]$}{
\nl \Return $\mathit{b}[\mathit{desired}][\ast]$; \textbf{break} \nllabel{ln:find_bin_done}
}}}

\nl{\bf Function $\boldsymbol{\mathit{Formulate\_queries(\mathit{b}[\mathit{desired}][\ast])}}$} \nllabel{ln:function_generate_queries}
\Begin{

\nl \For{$\forall j\in (\mathit{b}[\mathit{desired}][j])$}{
\nl $\mathit{cell\_id} \leftarrow \mathit{b}[\mathit{desired}][j]$, 
 $\mathit{counter} \leftarrow \mathit{c\_tuple}[\mathit{cell\_id}]$ \nllabel{ln:get_cell_counter}

\nl $\forall \mathit{counter}$, generate ciphertexts $\mathit{E}_k(\mathit{cell\_id}||\mathit{counter})$ \nllabel{ln:get_queries}
}
}
}
\caption{Bin-packing-based query execution method.}
\label{alg:Query execution}
\end{algorithm}
\setlength{\textfloatsep}{0pt}

\medskip
\noindent\textbf{\textsc{Step} 4: Integrity verification and final answers filtering.}
We may optionally verify the integrity of the retrieved tuples. To do so, the enclave, first, creates a hash chain over the real encrypted tuples having the same cell-id, by following the same steps as $\mathcal{DP}$ followed (Lines~\ref{ln:function_hash_chain} of Algorithm~\ref{alg:Data encryption algorithm}). Then, it compares the final hash digest against the decrypted verifiable tag, provided by $\mathcal{DP}$.

Now, to answer the query, the enclave, first, filters those tuples that do not qualify the query predicate, since all tuples of a bin may not correspond to the answer. Thus, decrypting each tuple to check against the query $Q_e=\langle \mathit{qa},(\mathcal{L}=l,\mathcal{T}=t)\rangle$ may increase the computation cost. To do so, after implementing the above-mentioned \textsc{Step} 3, the enclave generates appropriate filter values ($\mathcal{E}_k(l_i||t_i)$ or $\mathcal{E}_k(\mathit{o}_i||t_i)$, which are identical to the created by $\mathcal{DP}$ using Algorithm~\ref{alg:Data encryption algorithm}); while, at the same time, DBMS executes queries on the encrypted data. On receiving encrypted tuples from DBMS, the enclave performs string-matching operations using filters and decrypts only the desired tuples, if necessary. 

\medskip
\noindent\textbf{Example~\ref{sec:point Query Execution}.2.}
Consider the cells created in Example~\ref{sec:Data encryption and Outsourcing}.1, \textit{i}.\textit{e}.,
$\mathit{cell\_id}[]=\mathit{cid}_1,\mathit{cid}_2,\mathit{cid}_1,\mathit{cid}_4$ and $\mathit{c\_tuple}[]=\{4,1,1\}$.
Now, assume that there are two bins, namely $\mathit{b}_1: \langle \mathit{cid}_1\rangle$ and $\mathit{b}_2: \langle \mathit{cid}_2, \mathit{cid}_3 \rangle$.
Consider a query $Q=\langle$count$, (l_2,t_5)\rangle$, \textit{i}.\textit{e}., find the number of people at location $l_2$ at time $t_5$ on the data shown in Table~\ref{tab:Encrypted data at the cloud}. Here, after implementing \textsc{Step} 1 and \textsc{Step} 2 of BPB method, the enclave knows that cell-id $\mathit{cid}_2$ satisfies the query, and hence, the tuples corresponding to bin $b_2$ are required to be fetched. Thus, in \textsc{Step} 3, the enclave generates the following four queries: $\mathcal{E}_k(\mathit{cid}_2||1)$, $\mathcal{E}_k(\mathit{cid}_3||1)$, $\mathcal{E}_k(f||1)$, and $\mathcal{E}_k(f||2)$. Finally, in \textsc{Step} 4, the filtering via string matching is executed over the retrieved four tuples against $\mathcal{E}_k(l_2||t_5)$. Since all the four retrieved tuples have a filter on location and time values, here is no need to decrypt the tuple that does not match the filter $\mathcal{E}_k(l_2||t_5)$.$\blacksquare$




\begin{figure}[!t]
\scriptsize
\begin{mdframed}[style=MyFrame,nobreak=true]			\begin{minipage}{.4\linewidth}
\texttt{max(int x,int y)\{\\
bool getX = ogreator(x, y),\\
return omove(getX, x, y)\\
\}
}
    \subcaption{Oblivious maximum.}
    \label{fig:Oblivious max.}
			\end{minipage}
			\begin{minipage}{.33\linewidth}
\texttt{mov  rcx, x\\
mov  rdx, y\\
cmp  rcx, rdx\\
setg al\\
retn\\
}
    \subcaption{Oblivious compare: \texttt{ogreator}.}
    \label{fig:ogreator}
			\end{minipage}			
			\begin{minipage}{.25\linewidth}
\texttt{mov  rcx, cond\\
mov  rdx, x\\
mov  rax, y\\
test rcx, rcx\\
cmovz rax, rdx\\
retn
}
\subcaption{Oblivious move: \texttt{omove}.}
\label{Oblivious move.}
\end{minipage}
\end{mdframed}
\caption{Register-oblivious operators~\cite{DBLP:conf/uss/OhrimenkoSFMNVC16}.}
\label{fig:Register-oblivious operators}
\end{figure}

\subsection{Oblivious Trapdoor Creation \& Filtering Steps}
\label{subsec:Oblivious Trapdoor Creation and Filtering Steps}
Steps for generating trapdoor (\textsc{Step} 3) and answer filtering (\textsc{Step} 4), in~\S\ref{subsec:Bin-Packing-based BPB Method Query Execution}, were not oblivious due to side-channel attacks (\textit{i}.\textit{e}., access-patterns revealed via cache-lines and branching operations) on the enclave. Thus, we describe how can the enclave produce queries and process final answers obliviously for preventing side-channel attacks.

\medskip
\noindent{\textbf{\textsc{Step} 3.}}
Let $\#_{Cmax}$ be the maximum cells required to form a bin. Let $\#_{max}$ be the maximum tuples with a cell-id. Let $\#_i$ be the number of tuples with a cell-id $C_i$. For a bin $b_i$, the enclave generates $\#_{Cmax}\times \#_{max}$ numbers of queries: $\mathcal{E}_k(C_i||j,v)$, where $1 \leq j \leq \#_{max}$, $C_i\leq\#_{Cmax}$, and $v=1$ if $j\leq \#_i$ and $C_i$ is required for $b_i$; otherwise, $v=0$. Note that \emph{this step produces the same number of queries for each cell and each bin}.

Let $\#_{\mathit{fmax}}$ be the maximum fake tuples required for a bin.
Let $\#_{\mathit{fb}_i}$ be the maximum fake tuples required for $b_i$. The enclave generates $\#_{\mathit{fmax}}$ number of fake queries: $\mathcal{E}_k(f||j,v)$, where $1 \leq j \leq \#_{\mathit{fmax}}$ and $v=1$ if $j\leq \#_{\mathit{fb}_i}$; otherwise, 0. This step \emph{produces the same number of fake queries for any bin}. Finally, the enclave sorts all real and fake queries based on value $v$ using a data-independent sorting algorithm (\textit{e}.\textit{g}., bitonic sort~\cite{bitonicsort}), such that all queries with $v=1$ precede other queries, and sends only queries with $v=1$ to the DBMS.

\medskip
\noindent{\textbf{\textsc{Step} 4.}} The enclave reads all retrieved tuples and appends $v=1$ to each tuple if they satisfy the query/filter; otherwise, $v=0$. Particularly, an $i^{\mathit{th}}$ tuple is checked against each filter, and once it matches one of the filters, $v=1$ remains unchanged; while the value of $v=1$ is overwritten for remaining filters checking on the $i^{\mathit{th}}$ tuple. It hides that which filter has matched against a tuple. Then, based on $v$-value, it sorts all tuples using a data-independent algorithm.\footnote{{\scriptsize If all tuples can reside in the enclave, then bitonic sort is enough. Otherwise, to obliviously sort the tuples, we use column sort~\cite{columnsort} instead of the standard external merge sort.}} (After this all tuples with $v=1$ are decrypted and checked against the query by following the same procedure, if needed.)

\noindent
{\textit{Branch-oblivious computation}.} Note that after either generating an equal number of queries for any bin or filtering the retrieved tuples using a data-oblivious sort, the entire computation is still vulnerable to an adversary that can observe conditional branches, \textit{i}.\textit{e}., an if-else statement used in the comparison. Thus, to overcome such an attack, we use the idea proposed by~\cite{DBLP:conf/uss/OhrimenkoSFMNVC16}.~\cite{DBLP:conf/uss/OhrimenkoSFMNVC16} suggested that any computation on registers cannot be observed by an adversary since register contents are not accessible to any code outside of the enclave; thus, register-to-register computation is oblivious. For this,~\cite{DBLP:conf/uss/OhrimenkoSFMNVC16} proposed two operators: \texttt{omove} and \texttt{ogreater}, as shown in Figure~\ref{fig:Register-oblivious operators}. For any comparison in the enclave, we use these two operators. Readers may find additional details in~\cite{DBLP:conf/uss/OhrimenkoSFMNVC16}.




\bgroup
\def\arraystretch{1.5}
\begin{table}[!t]
\centering 
\begin{tabular}{l||l|l|l|l|}\hline

  $T_4$ &$\mathit{cid}_1^{\{1,1\}}=40$ & $\mathit{cid}_6^{\{1,2\}}=30$ & $\mathit{cid}_7^{\{1,2\}}=2$ & $\mathit{cid}_{11}^{\{1,4\}}=9$ \\\hline

  $T_3$ & $\mathit{cid}_2^{\{2,1\}}=50$ & $\mathit{cid}_7^{\{2,2\}}=50$ & $\mathit{cid}_6^{\{2,3\}}=21$ & $\mathit{cid}_9^{\{2,4\}}=60$ \\\hline

  $T_2$ & $\mathit{cid}_3^{\{3,1\}}=60$ & $\mathit{cid}_{11}^{\{3,2\}}=40$ & $\mathit{cid}_4^{\{3,3\}}=45$ & $\mathit{cid}_8^{\{3,4\}}=48$ \\\hline

  $T_1$ & $\mathit{cid}_3^{\{4,1\}}=40$ & $\mathit{cid}_{10}^{\{4,2\}}=50$ & $\mathit{cid}_{10}^{\{4,3\}}=10$ & $\mathit{cid}_5^{\{4,4\}}=50$ \\\hline\hline

  ~& $l_1$ & $l_2$ & $l_3$ & $l_4$
\end{tabular}
\caption{A $4\times 4$ grid.}
\label{tab:4by4 grid}
\BBB
\end{table}
\egroup

\section{Range Query Execution}
\label{sec:Range Query}

This section develops an algorithm for executing range queries, by modifying BPB method, given in~ \S\ref{subsec:Bin-Packing-based BPB Method Query Execution}. For simplicity, we consider a range condition on time attribute. 
For illustration purposes, this section uses a $4\times 4$ grid (see Table~\ref{tab:4by4 grid}, which was created by $\mathcal{DP}$ using Algorithm~\ref{alg:Data encryption algorithm},~ \S\ref{sec:Data encryption and Outsourcing}) corresponding to location and time attributes of a relation. In this grid, 11 cell-ids are used, and a number in a cell shows the number of tuples allocated to the cell. The notation $T_i$ shows an $i^{\mathit{th}}$ sub-time interval (after creating a grid using Algorithm~\ref{alg:Data encryption algorithm}~ \S\ref{sec:Data encryption and Outsourcing}). 

\subsection{Trivial Solution: Converting a Range Query into Many Point Queries} Recall that the data provider outsources two vectors $\mathit{cell\_id}[]$ and $\mathit{c\_tuple}[]$. Based on these vectors, a trivial method depending on the bin-packing-based method can be developed to answer a range query, as follows:
\begin{enumerate}[leftmargin=0.2in]
\item
Find all cells and their cell-ids that cover the requested range (by implementing \textsc{Step} 1 of BPB method~\S\ref{subsec:Bin-Packing-based BPB Method Query Execution}).
  \item
Find all those bins that cover the desired cells-ids (by implementing \textsc{Step} 2 of BPB method~\S\ref{subsec:Bin-Packing-based BPB Method Query Execution}).
  \item
Fetch all tuples corresponding to the bins by forming appropriate queries (using \textsc{Step} 3 of BPB method~\S\ref{subsec:Bin-Packing-based BPB Method Query Execution}) and process all retrieved tuples inside the secure hardware (using \textsc{Step} 4 of BPB method~\S\ref{subsec:Bin-Packing-based BPB Method Query Execution}).
\end{enumerate}
But, converting a range query into many point queries may not be efficient, in terms of the number of tuples to be fetched from the cloud and/or the number of tuples to be processed by the secure hardware; see the following example.

\noindent
\textbf{Example~\ref{sec:Range Query}.1.} After implementing the bin-packing-algorithm on the cell-ids $\mathit{c\_tuple}[11]$ (see Table~\ref{tab:4by4 grid}), we may obtain the following six bins:
$b_1: \langle \mathit{cid}_3\rangle$,
$b_2: \langle \mathit{cid}_9,\mathit{cid}_1 \rangle$,
$b_3: \langle \mathit{cid}_{10},\mathit{cid}_4 \rangle$,
$b_4: \langle \mathit{cid}_7,\mathit{cid}_8 \rangle$,
$b_5: \langle \mathit{cid}_6,\mathit{cid}_{11} \rangle$, and
$b_6: \langle \mathit{cid}_2,\mathit{cid}_5\rangle$. Consider a query to count the number of tuples at the location $l_1$ during a given time interval that is covered by $T_2$ to $T_4$. This query is satisfied by cells-ids $\mathit{cid}_{1}$, $\mathit{cid}_{2}$, and $\mathit{cid}_{3}$.
The cell-ids $\mathit{cid}_{1}$, $\mathit{cid}_{2}$, and $\mathit{cid}_{3}$ belong to bins $b_2$, $b_6$, and $b_1$, respectively. Thus, we need to fetch tuples corresponding to three bins: $b_1$, $b_2$, and $b_6$, \textit{i}.\textit{e}., fetching 300 tuples from the cloud, while only 150 tuples satisfy the query.


\subsection{Enhanced Bin-Packing-Based (eBPB) Method}
\label{subsec:Enhanced Bin-Packing-Based eBPB Method}
eBPB method requires $\mathcal{DP}$ to send the number of tuples in each cell of the grid with the vector $\mathit{cell\_id}[]$. Thus, it avoids sending the vector $\mathit{c\_tuple}[]$. For example, for the grid shown in Table~\ref{tab:4by4 grid}, 
$\mathit{cell\_id}[4,4]=\{
(1,40),(6,30),(7,2),$ $(11,9),(2,50),(7,50),(6,21),(9,60),(3,60),(11, 40),(4,45),$ $(8,48),(3,40),(10,50), (10,10), (5,50) \}$.
%
This information helps us in creating bins more efficiently for a range query, as follows:

\medskip
\noindent\textbf{\textsc{Step} 1: Preliminary step.}
The enclave decrypts the encrypted vector $\mathit{Ecell\_id}[]$.


\medskip
\noindent\textbf{\textsc{Step} 2: Finding top-$\ell$ cell-ids.}
Find top-$\ell$ cells having the maximum number of tuples in one of the locations, where $\ell$ is the number of cells required to answer the range query. Say, location $l_i$ has top-$\ell$ cells that have the maximum number of tuples, denoted by $\mathit{bsize}$ tuples.



\medskip\noindent\textbf{\textsc{Step} 3: Create bins.} Execute this step either if $\ell$ cells required for the current query are more than the cells required for any previously executed query or it is the first query. Fix the bin size to $\mathit{bsize}$ and execute FFD that takes $\mathit{cid}_z^{\{p,q\}}$ as inputs and the number of tuples having $\mathit{cid}_z^{\{p,q\}}$ as the weight of the input. If the bin does not have $\mathit{bsize}$ number of tuples, add fake tuples to the bin. It results in $|Bin|$ number of bins and then, use all such bins for answering any range covered by $\ell$ cells.


\medskip\noindent\textbf{\textsc{Step} 4: Query formulation and final answers filtering.} Find the desired bin satisfying the range query and formulate appropriate queries, as we formed in \textsc{Step} 3 of BPB method~\S\ref{subsec:Bin-Packing-based BPB Method Query Execution}. The DBMS executes all queries and provides the desired tuples to the enclave. The enclave executes the final processing of the query, likewise \textsc{Step} 4 of BPB method~\S\ref{subsec:Bin-Packing-based BPB Method Query Execution}. \textit{Note that for oblivious query formulation and result filtering, we use the same method as described in~\S\ref{subsec:Oblivious Trapdoor Creation and Filtering Steps}.}

\medskip\noindent
\textbf{Example~\ref{subsec:Enhanced Bin-Packing-Based eBPB Method}.1.}
 Consider a query to count the number of tuples at the location $l_1$ during a given time interval that is covered by $T_2$ to $T_4$. This query spans over three cells; see Table~\ref{tab:4by4 grid}.
Here, the maximum number of tuples in any three cells at locations $l_1$, $l_2$, $l_3$, and $l_4$ are $60+50+40=150$, $50+50+40=140$, $45+21+5=71$, and $60+50+48=158$, respectively. Thus, the bin of size 158 can satisfy any query that spans over any three cells (arranged in a column) of the grid. $\blacksquare$


\medskip\noindent\textbf{Example~\ref{subsec:Enhanced Bin-Packing-Based eBPB Method}.2:} \textit{attack on eBPB}.
Consider the following queries on data shown in Table~\ref{tab:4by4 grid}:
(\textit{$Q_1$}) retrieve the number of tuples having location $l_1$ during
a given time interval that is covered by $T_1$ and $T_2$, and
(\textit{$Q_2$}) retrieve the number of tuples having location $l_1$ during
a given time interval that is covered by $T_2$ and $T_3$.
Answering $Q_1$ and $Q_2$ may reveal the number of tuples having $T_1$, $T_2$, and $T_3$, as: in answering \textit{$Q_2$} we do not retrieve 40 tuples (corresponding to $\{4,1\}$ cell) that were sent in answering \textit{$Q_1$} and retrieve 50 new tuples (corresponding to $\{2,1\}$ cell). It, also, reveals that 60 tuples (corresponding to $\{3,1\}$ cell) belong to $T_2$. Note that all such information was not revealed, prior to query execution, due to the ciphertext indistinguishable dataset.$\blacksquare$ 

\subsection{Highly Secured Range Query--- winSecRange}
\label{subsec:Optimizing Range Query}
We, briefly, explain a method to prevent the above-mentioned attacks on a range query. Particularly, we fix the length of a range, say $\lambda>1$, and discretize $n$ domain values, say $v_1, v_2, \ldots, v_n$, into $\lceil \frac{n}{\lambda} \rceil$ intervals (denoted by $\mathcal{I}_i$, $1\leq i \leq \lceil \frac{n}{\lambda}\rceil$), as:
$\mathcal{I}_1=\{v_1, v_2, \ldots, v_{\lambda}\}$,
$\ldots$,
$\mathcal{I}_{\lceil \frac{n}{\lambda}\rceil}=\{v_{n-\lambda}, \ldots, v_{n-1},v_n\}$. Here, the bin size equals to the maximum number of tuples belonging to an interval, and bins are created for each interval \emph{only once}. For example, consider 12 domain values: $v_1, v_2, \ldots v_{12}$, and $\lambda= 3$. Thus, we obtain intervals:
$\mathcal{I}_1=\{v_1, v_2, v_3\}$,
$\mathcal{I}_2=\{v_4, v_5, v_6\}$,
$\mathcal{I}_3=\{v_7, v_8, v_9\}$, and
$\mathcal{I}_4=\{v_{10}, v_{11}, v_{12}\}$. Here, four bins are created, each of size equals to the maximum number of tuples in any of the intervals. Now, we can answer a range query of length $\beta$ by using one of the following methods:
\begin{enumerate}[leftmargin=0.2in]
    \item 
    $\beta\leq \lambda$ and $\beta\in \mathcal{I}_i$: Here, the entire range exists in $\mathcal{I}_i$; hence, we retrieve only a single entire bin satisfying the range. E.g., if a range is $[v_1, v_2]$, then we need to retrieve the bin corresponding to $\mathcal{I}_1$.
\item 
$\beta\leq \lambda$ and $\beta\in \{\mathcal{I}_i,\mathcal{I}_j\}$:
It may be possible that $\beta\leq \lambda$ but the range $\beta$ lies in $\mathcal{I}_i$ and $\mathcal{I}_j$, $i\neq j$. Thus, we need to retrieve two bins that cover $\mathcal{I}_i$ and $\mathcal{I}_j$, and hence, we also prevent the attack due to sliding the time window (see Example~\ref{subsec:Enhanced Bin-Packing-Based eBPB Method}.2). For example, if a range is $[v_2, v_4]$, then we need to retrieve the bins corresponding to $\mathcal{I}_1$ and $\mathcal{I}_2$.
\item 
$\beta = z \times \lambda$: Here, a range may belong to at most $z+2$ intervals. Thus, we may fetch at most $z+1$ bins satisfying the query. E.g., if a range is $[v_3, v_8]$, then this range is satisfied by intervals $\mathcal{I}_1$, $\mathcal{I}_2$, and $\mathcal{I}_3$; thus, we fetch bins corresponding to $\mathcal{I}_1$, $\mathcal{I}_2$, and $\mathcal{I}_3$.

\end{enumerate}

\section{Supporting Dynamic Insertion}
\label{sec:Insert Operation}

Dynamic insertion in \textsc{Concealer} is supported by batching updates into rounds/epochs, similar to~\cite{DBLP:conf/sigmod/DemertzisPPDG16}. Tuples inserted in an $i^{\mathit{th}}$ period are said to belong to the round $\mathit{rd}_i$ or epoch $\mathit{eid}_i$. The insertion algorithm is straightforward. \textsc{Concealer} applies Algorithm~\ref{alg:Data encryption algorithm} on tuples of epochs prior to sending them to $\mathcal{SP}$. Note that since Algorithm~\ref{alg:Data encryption algorithm} for distinct rounds is executed independently, the tuples corresponding to the given attribute value (\textit{e}.\textit{g}., location-id) may be associated with different bins in different rounds. Retrieving tuples of a given attribute value across different rounds needs to be done carefully, since it might result in leakage, as shown next.


\medskip
\noindent\textbf{Example~\ref{sec:Insert Operation}.1.} 
Consider that the bin size is three, and we have the following four bins for each round of data insertion, where a bin $b_i$ stores tuples of a location $l_j$:

\centerline{
$\mathit{rd}_1: \quad b_1: \langle l_1, l_2, l_3\rangle \quad b_2: \langle l_4, l_4, l_4\rangle \quad b_3: \langle l_5, l_5, l_5\rangle \quad b_4: \langle l_6, l_6, l_6\rangle$}
\centerline{$\mathit{rd}_2: \quad b_1^{\prime}: \langle l_1, l_1, l_1\rangle \quad b_2^{\prime}: \langle l_2, l_2, l_2\rangle \quad b_3^{\prime}: \langle l_3, l_3, l_3\rangle \quad b_4^{\prime}: \langle l_4, l_5, l_6\rangle$}
Now, answering a query for $l_1$ fetches bins $b_1$ and $b_1^{\prime}$; a query for $l_2$ fetches $b_1$ and $b_2^{\prime}$; and a query for $l_3$ fetches $b_1$ and $b_3^{\prime}$. Note that here $b_1$ is retrieved with three new bins ($b_1^{\prime}$, $b_2^{\prime}$, $b_3^{\prime}$); it reveals that $b_1$ has three distinct locations. Similarly, $b_4^{\prime}$ will be retrieved with three older bins ($b_2$, $b_3$, and $b_4$). Thus, the query execution on older and newer data reveals additional information to the adversary. $\blacksquare$

To prevent such attacks, we need to appropriately modify our query execution methods. In our technique, we will assume that bins across all rounds are of a fixed size, $|b|$,\footnote{{\scriptsize We are not interested in hiding different numbers of tuples in different rounds, but using fake tuples it can be prevented, if desired.}} and the number of tuples for a given attribute value (\textit{i}.\textit{e}., location) fits within a bin (\textit{i}.\textit{e}., $\leq |b|$). Our idea is inspired by Path-ORAM~\cite{DBLP:journals/jacm/StefanovDSCFRYD18}, while we overcome the limitation of Path-ORAM that achieves indistinguishability for query execution by keeping a meta-index structure at the trust entity. Note that Path-ORAM builds a binary tree index on the records. To retrieve a single record, Path-ORAM fetches $\BigO(\log n)$ records and rewrites them under a different encryption. Since Path-ORAM uses an external data structure, it cannot be used for our purpose as argued in~\S\ref{sec:Introduction}. Below, we provide our modified query execution strategy.



\medskip
\noindent \emph{\textbf{Executing queries.}}
Let $rd_i$, $rd_j$, $rd_k$, and $rd_l$ be four consecutive rounds of data insertion. Let $q$ be a query that spans over $rd_j$, $rd_k$, and $rd_l$ rounds; however, only rounds $rd_j$ and $rd_l$ have bins that satisfy query $q$. For answering $q$, the modified query execution method takes the following three steps:
(\textit{i}) The enclave fetches the desired bin from $rd_j$ and $rd_l$ rounds by following methods given in~\S\ref{subsec:Bin-Packing-based BPB Method Query Execution} and~\S\ref{sec:Range Query}, with randomly selected $\log |\mathit{Bin}|-1$ additional bins from each $rd_j$ and $rd_l$ round, where $|\mathit{Bin}|$ are created for each round using Algorithm~\ref{alg:Query execution}.
(\textit{ii}) The enclave fetches $\log |\mathit{Bin}|$ bins from round $rd_k$, to hide the fact that $rd_k$ does not satisfy the query.
(\textit{iii}) For round $rd_x$, $x\in \{j,k,l\}$, the enclave, first, permutes the retrieved data of $rd_x$ and encrypts with a new key.\footnote{\scriptsize The key $k$ for re-encryption is generated as: $k\leftarrow s_k||\mathit{eid}||\mathit{counter}$, where SGX maintains a counter for each round,and increment it by one whenever the data of a round  is read in SGX and rewritten.} The newly encrypted data replaces the older data of $rd_x$. 

\noindent\textit{Aside.} Since we rewrite tuples of retrieved bins, when asking a query for another value belonging to the previously fetched bin (\textit{e}.\textit{g}., query for $l_2$ in Example~\ref{sec:Insert Operation}.1), the adversary cannot link bins of different rounds of data insertion based on attribute values in the bins.

\section{Security Properties}
\label{sec:Security Threats and Properties}

This section presents the desired security requirements, discusses which requirements are satisfied by \textsc{Concealer}, and information leakages from the algorithms. To develop applications on top of spatial time-series dataset at an untrusted $\mathcal{SP}$, a system needs to satisfy the following security properties:

\medskip\noindent
\textbf{Ciphertext indistinguishability:} property requires that any two or more occurrences of a cleartext value look different in the ciphertext. Thus, by observing the ciphertext, an adversary cannot learn anything about encrypted data. 
\textsc{Concealer} satisfies this property by producing unique ciphertext for each tuple by Algorithm~\ref{alg:Data encryption algorithm} (Line~\ref{ln:function_cell_formation}).

\medskip\noindent
\textbf{Data integrity:} property requires that if the adversary injects any false data into the real dataset, it must be detected by a trusted entity. \textsc{Concealer} ensures integrity property by maintaining hash chains over the encrypted tuples and sharing encrypted verifiable tags, which helps SGX to detect any inconsistency between the actual data shared by $\mathcal{DP}$ and the data SGX accesses from the disk at $\mathcal{SP}$.

\medskip\noindent
\textbf{Query execution security:} requires satisfying output-size prevention, indistinguishability under chosen keyword attacks (IND-CKA), and forward privacy.

\medskip\noindent
\textbf{\textit{Output-size prevention:}} property requires that the number of tuples corresponding to a value, \textit{e}.\textit{g}., $\mathcal{L}$, (or a value corresponding to a combination of attributes, \textit{e}.\textit{g}., $\mathcal{L}$ and $\mathcal{T}$) \textit{i}.\textit{e}., the volume of the value, is not revealed, and only the maximum output-size/volume of the value is revealed. \textsc{Concealer} ensures this property by retrieving a fixed-size bin from DBMS into SGX, regardless of the query predicates.

\medskip\noindent
\textbf{\textit{IND-CKA.}}  
IND-CKA~\cite{DBLP:journals/jcs/CurtmolaGKO11} prevents leakages other than what an adversary can gain through information about
(\textit{i}) \emph{metadata}, \textit{i}.\textit{e}., size of database/index, known as \emph{setup leakage} $\mathfrak{L}_s$ in~\cite{DBLP:journals/jcs/CurtmolaGKO11}, and
(\textit{ii}) \emph{query execution} that results in \emph{query leakage} $\mathfrak{L}_q$ in~\cite{DBLP:journals/jcs/CurtmolaGKO11} and includes search-patterns and access-patterns. Again, note that by revealing the access-patterns, IND-CKA is prone to attacks based on the output-size.

While \textsc{Concealer} leaks $\mathfrak{L}_s$ by the size of database/index and $\mathfrak{L}_q$ by fetching data in the form of a bin, it does not reveal information based on the output-size, except a constant output-size for all query predicates. (Also, since by fetching a bin, it does not reveal which rows of the bin satisfy the answer, it hides partial access-patterns.) Thus, \textsc{Concealer} improves the security guarantees of IND-CKA.


\medskip\noindent
\textit{\textbf{Forward privacy:}} property requires that newly inserted tuples cannot be linked to previous search queries, \textit{i}.\textit{e}., the adversary that have collected trapdoors for previous queries, cannot use them to match newly added tuples. \textsc{Concealer} ensures forward privacy by, first, producing a different ciphertext of an identical value over two different epochs using two different keys (as mentioned in~\S\ref{sec:Data encryption and Outsourcing}), and then, re-encrypting the tuples of different epochs using different keys during query execution spanning over multiple epochs (as mentioned in~\S\ref{sec:Insert Operation}).


\medskip\medskip
Now, we discuss leakages from different query execution methods, as follows:

\medskip
\noindent\textbf{BPB information leakage discussion.}
BPB method prevents the attacks based on output-size by fetching an identical number of tuples for answering any query. It reveals the dataset and index sizes stored in DBMS  (as $\mathfrak{L}_s$  condition of IND-CKA~\cite{DBLP:journals/jcs/CurtmolaGKO11}). BPB method, also, reveals \emph{partial} access- and search-patterns, which means that for a group of queries it reveals a fixed bin of tuples, and thus, hides which of the tuples of bin satisfy a particular query ($\mathfrak{L}_q$ conditions of IND-CKA). Recall that an index, \textit{e}.\textit{g}., B-tree index, on the desired attribute is created by the underlying DBMS. To show that the index will not lead to additional leakages other than $\mathfrak{L}_s$ and $\mathfrak{L}_q$, we follow the identical strategy to prove a technique is IND-CKA secure or not. In short, we need to show that a simulator not having the original data can also produce the index attribute based on $\mathfrak{L}=\{\mathfrak{L}_s$, $\mathfrak{L}_q\}$, \textit{i}.\textit{e}., BPB method is secure if a ``fake'' attribute can mimic the real index attribute, (and hence, mimic the real index). Note that like SSEs, the simulator having only $\mathfrak{L}$ can generate a fake dataset, and hence, the index attribute can mimic the real index attribute; thus, the adversary cannot deduce additional information based on $\mathfrak{L}$.

Also, note that in oblivious \textsc{Step} 3, the enclave generates the same number of real/fake queries regardless of a bin and sorts them using a data-independent algorithm, which hides access-patterns in SGX. Also, it processes all retrieved tuples against the query and does oblivious sorting in \textsc{Step} 4. Thus, it also does not reveal access-patterns (by missing any tuple to process).


\medskip\noindent\textbf{eBPB and winSecRange information leakage discussion.}
 eBPB method incurs leakages $\mathfrak{L}_s$ and $\mathfrak{L}_q$. Based on $\mathfrak{L}_q$, we may reveal that a range query is spanning over at most $\ell$ cells. Hence, \emph{it may also reveal the exact data distribution}, by fetching the same real tuple multiple times for multiple range queries, which we illustrated in Example~\ref{subsec:Enhanced Bin-Packing-Based eBPB Method}.2. To overcome such information leakage, winSecRange fetches a fixed size interval, regardless of the query range. Thus, while winSecRange reveals $\mathfrak{L}_s$ and $\mathfrak{L}_q$, it does not reveal any information based on the output-size. 

\bgroup
\def\arraystretch{1.3}
\begin{table*}[!t]
\BBB
  \centering

  \footnotesize
  \begin{tabular}{|p{5.9cm}|p{10.5cm}|}
  \hline
  \textbf{Queries} & \textbf{Execution (filtering, decryption, and final processing) by the secure hardware} \\\hline\hline
  \multicolumn{2}{|c|}{Location and time attributes} \\\hline

  Q1: \# observations at $l_i$ during time $t_1$ to $t_x$ & SM using the filters $El_1\leftarrow E_k(l_i||t_1)$, $El_2\leftarrow \mathcal{E}_k(l_i||t_2)$, $\ldots$,
  $El_k\leftarrow \mathcal{E}_k(l_i||t_x)$. No decryption needed.\\\hline

  Q2: Locations that have top-k observations during $t_1$ to $t_x$   & \multirow{2}{*}{\parbox{10cm}{SM using the filters $El_m \leftarrow \mathcal{E}_k(l_i||t_j)$ where $i\in \mathit{Domain}(\mathcal{L})$  and $j\in \{t_1, t_x\}$, and then, decrypt $\mathcal{E}_k(l||t||o)$ of  qualified tuples only for final processing.}} \\
  \hhline{-~} Q3: Locations that have at least 10 observations during $t_1$ to $t_x$ & \\\hline\hline

  \multicolumn{2}{|c|}{Observation and time attributes} \\\hline
  Q4: Which locations have an observation $o_i$ during $t_1$ to $t_x$ & SM using $\mathit{Eo}_j \leftarrow \mathcal{E}_k(o_i||t_j)$, $j\in \{t_1, t_x\}$, and then, decrypt $\mathcal{E}_k(l||o||t)$ of qualified tuples to know locations. \\\hline\hline

  \multicolumn{2}{|c|}{Observation, location, and time attributes} \\\hline
  Q5: \# an observation $o_i$ has happened at $l_i$ during $t_1$ to $t_x$ & SM using $\mathit{Eo}_j \leftarrow \mathcal{E}_k(o_i||t_j||l_i)$, where $j\in \{t_1, t_x\}$. No decryption needed. \\\hline

\end{tabular}
\caption{Sample queries. \emph{Notation}: SM: String matching.}
\label{tab:Sample queries}
\BBB\BBB\BBB
\end{table*}
\egroup

\medskip\noindent\textbf{Insert operation information leakage discussion.} Our insert operation satisfies forward privacy property. Since for encrypting tuples of an epoch, we generate a key that is unique among all keys generated for any epoch. Thus, based on the previous query trapdoors, the adversary cannot use them to link new tuples. Furthermore, our insert operation hides the distribution leakage due to executing queries over multiple epochs, since we fetch additional tuples from each epoch that lies in the query range and re-write all tuples.

\section{Preventing Attacks due to Query Workload}
\label{sec:Preventing Attacks due to Query Workload}
The (identical-sized) bins formed over different cell-ids may contain different numbers of cell-ids, and that indicates bins may have different numbers of attribute values. Thus, retrieval of bins may reveal data distribution, when executing multiple queries (under the assumption of uniform query workload, \textit{i}.\textit{e}., a query may arrive for each domain value).

\noindent\textbf{Example~\ref{sec:Preventing Attacks due to Query Workload}.1.} Consider 12 bins ($b_1, b_2, \ldots, b_{12}$) of an identical size having the following number of unique values: 1, 2, 9, 1, 2, 10, 1, 1, 1, 8, 2, 7, respectively. Here, if the query workload is uniformly distributed, then all bins will be retrieved different numbers of times. For instance, bins $b_1$, $b_4$, $b_7$, $b_8$, $b_9$ will be retrieved only once, while bin $b_6$ will be retrieved 10 times. It may reveal that bin $b_6$ may have 10 unique values, while bins $b_1$, $b_4$, $b_7$, $b_8$, $b_9$ have only a single value in each. Thus, having equal-sized bins does not help us in preventing data distribution due to query execution.

To prevent such an attack, our objective is to retrieve all the bins almost an equal number of times, under the assumption of uniform query workload. To achieve this, we create super-bins over the bins created in~\S\ref{subsec:Bin-Packing-based BPB Method Query Execution} or in~\S\ref{sec:Range Query}, as follows:
\begin{enumerate}[leftmargin=0.2in]
    \item 
    Sort all the bins in decreasing order of the number of unique values in a bin. Note that all bins have an equal number of rows, but an unequal number of unique values.
\item
Select $f>1$ such that $f$ divides the number of bins evenly and create $f$ super-bin.
\item
Select $f$ largest bins from the sorted order and allocate one in each super-bin. At the end of this step, all $f$ super-bin must have one bin.
\item
In the $i^{\mathit{th}}>1$ iteration, select the next bin, say $b_j$ from the sorted order and find a super-bin, say $S_k$, that is containing $i-1$ number of bins and have the fewest number of unique values in all the allocated bins to the super-bin $S_k$. Allocate the bin $b_j$ to the super-bin $S_k$. Otherwise, select a different super-bin.

\end{enumerate}

Now, for answering a query, the secure hardware will execute the steps given in~\S\ref{subsec:Bin-Packing-based BPB Method Query Execution} (to create bins to answer point queries) or the steps given in~\S\ref{sec:Range Query} (to create bins to answer range queries). After that to avoid workload attack, the secure hardware executes the above-mentioned steps on the created bins. For example, to avoid the workload attack on 12 bins given in Example~\ref{sec:Preventing Attacks due to Query Workload}.1, the following four super-bins may be created:
$S_1$: $\langle b_6,b_7,b_4 \rangle$,
$S_2$: $\langle b_3,b_5,b_8\rangle$,
$S_3$: $\langle b_{10},b_2,b_9\rangle$,
$S_4$: $\langle b_{12},b_{11},b_1\rangle$.
Now, observe that under the assumption of uniform query distribution, $S_1$, $S_2$, $S_3$, $S_4$ will be fetched 12, 12, 11, 10 times, respectively.

\section{Experimental Evaluation}
\label{sec:Experimental Evaluation}

This section shows the experimental results of \textsc{Concealer} under various settings and compares them against prior cryptographic approaches.

\subsection{Datasets, Queries, and Setup}
\label{subsec:Setup and Datasets}

\medskip\noindent\textbf{Dataset 1: Spatial time-series data generation.} To get a real spatial time-series dataset, we took our organization WiFi user connectivity dataset over 202 days having 136M(illion) rows. 
The IT department manages more than 2000 WiFi access-points (AP) by which they collect tuples of the form $\langle l_i,t_i,o_i\rangle$ on which they implemented Algorithm~\ref{alg:Data encryption algorithm} prior to sending WiFi data to us. In this data, each of 2000+ APs is considered as a location. We created two types of WiFi datasets: (\textit{i}) a small dataset of 26M WiFi connectivity rows collected over 44 days, and (\textit{ii}) a large dataset of 136M rows (of 14GB) collected over 202 days. For \textsc{Concealer} Algorithm~\ref{alg:Data encryption algorithm}, which produces encrypted data as shown in Table~\ref{tab:Encrypted data at the cloud}, we fixed a grid of $490 \times 16,000$ and allocated 87,000 cell-ids that resulted in two vectors $\mathit{cell\_id}[]$ and $\mathit{c\_tuple}[]$ of size 31MB. Data was encrypted using AES-256. This dataset has also skewed over the number of tuples at locations in a given time. For example, the minimum number of rows at all locations in an hour was $\approx$6,000, while the maximum number of rows at all locations in an hour was $\approx$50,000.


\medskip\noindent\textbf{Dataset 2: TPC-H dataset.} Since WiFi dataset has only three columns, to evaluate \textsc{Concealer}'s practicality in other types of data with more columns, we used 136M rows of LineItem table of TPC-H benchmark. We selected nine columns (Orderkey (OK), Partkey (PK), Suppkey(SK), Linenumber (LN), Quantity, Extendedprice, Discount, Tax, Returnflag). This dataset contains large domain values, also; \textit{e}.\textit{g}., in OK column, the domain value varies for 1 to 34M. We created
(\textit{i}) two indexes on attributes $\langle$OK, LN$\rangle$ and $\langle$OK, PK, SK, LN$\rangle$,
(\textit{ii}) two filters on concatenated values of $\langle$OK, LN$\rangle$ and $\langle$OK, PK, SK, LN$\rangle$, and
(\textit{iii}) one value that is the encryption of the concatenated values of all remaining five attributes. We used a $112,000 \times 7$ grid for index $\langle$OK, LN$\rangle$
and a $1500 \times 100 \times 10 \times 7$ grid for index $\langle$OK, PK, SK, LN$\rangle$. Each grid was allocated 87,000 cell-ids. The size of $\mathit{cell\_id}[]$ and $\mathit{c\_tuple}[]$ vectors for both grids was 54MB. Data is encrypted using Algorithm~\ref{alg:Data encryption algorithm} and AES-256. 

\medskip\noindent\textbf{Queries.} Table~\ref{tab:Sample queries} lists sample queries supported by \textsc{Concealer} on spatial time-series data. These queries as mentioned in~\S\ref{subsec:Entities and Assumptions} provide two applications: aggregate (Q1-Q3) and individualized (Q4-Q5). On TPH-C data, we executed count, sum, min/max queries.

\medskip\noindent
\textbf{Setup.} The IT department (worked as $\mathcal{DP}$) had a machine of 16GB RAM. Our side (worked as $\mathcal{SP}$) had a 16GB RAM Intel Xeon E3 machine with Intel SGX. At $\mathcal{SP}$, MySQL is used to store data, and $\approx$8,000 lines of code in C language is written query execution. 


We evaluate both versions of \textsc{Concealer} depending on the security of SGX:
(\textit{i}) one that assumes SGX to be completely secure against side-channel attacks, denoted by\textsc{Concealer}, and
(\textit{ii}) another that assumes SGX is not secure against side-channel attack (cache-line, branch shadow, page-fault attacks) and hence performs the oblivious computation in SGX (given in~\S\ref{subsec:Oblivious Trapdoor Creation and Filtering Steps}). denoted by \textsc{Concealer}+. In all our experiments, \textcolor[rgb]{1.00,0.00,0.00}{we show the overhead of preventing the side-channel attacks using red color}.



\begin{figure*}[!t]
		\B
		\begin{center}
\begin{minipage}{.99\linewidth}
				\centering
    \includegraphics[scale=0.40]{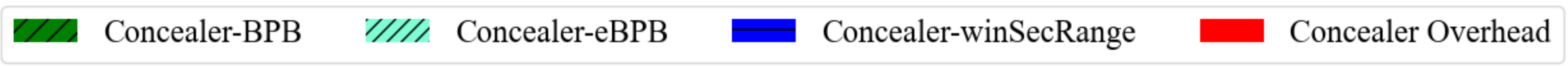}
    \BB
			\end{minipage}
			
			\begin{minipage}{.49\linewidth}
				\centering
    \includegraphics[scale=0.26]{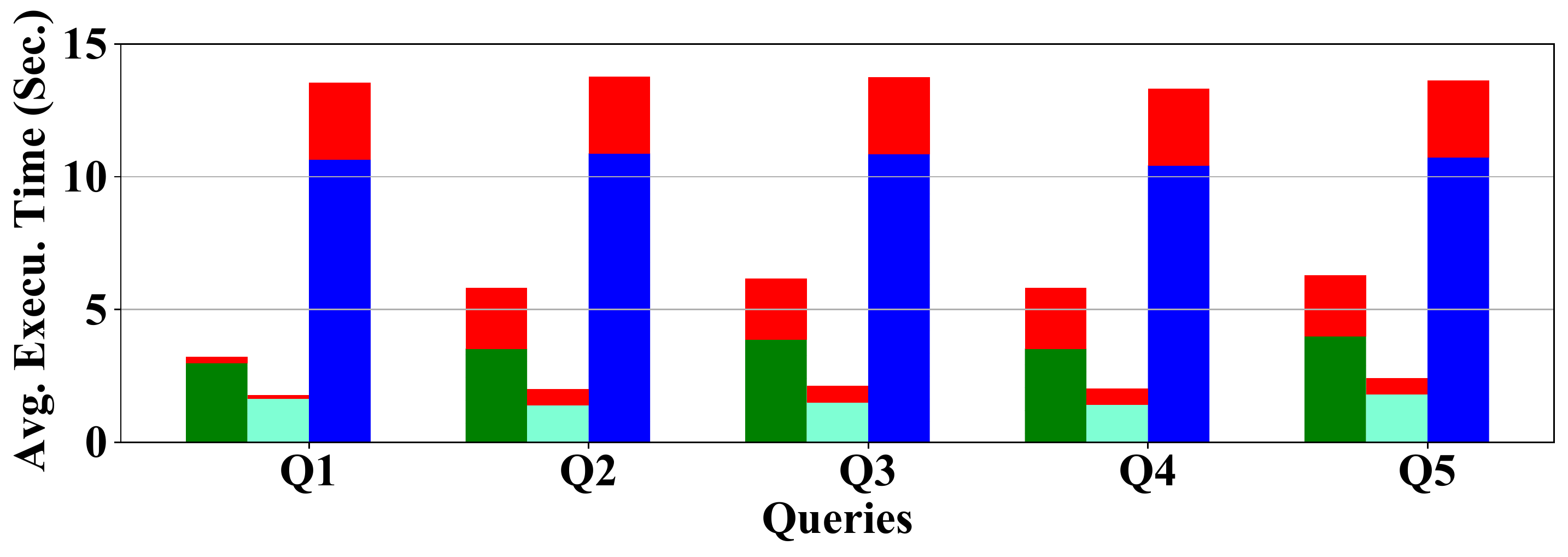}
    \BBB
    \caption{Exp 2: Range queries on 26M tuples.}
    \label{img-range-query-26}
			\end{minipage}			
			\begin{minipage}{.49\linewidth}
				\centering
    \includegraphics[scale=0.26]{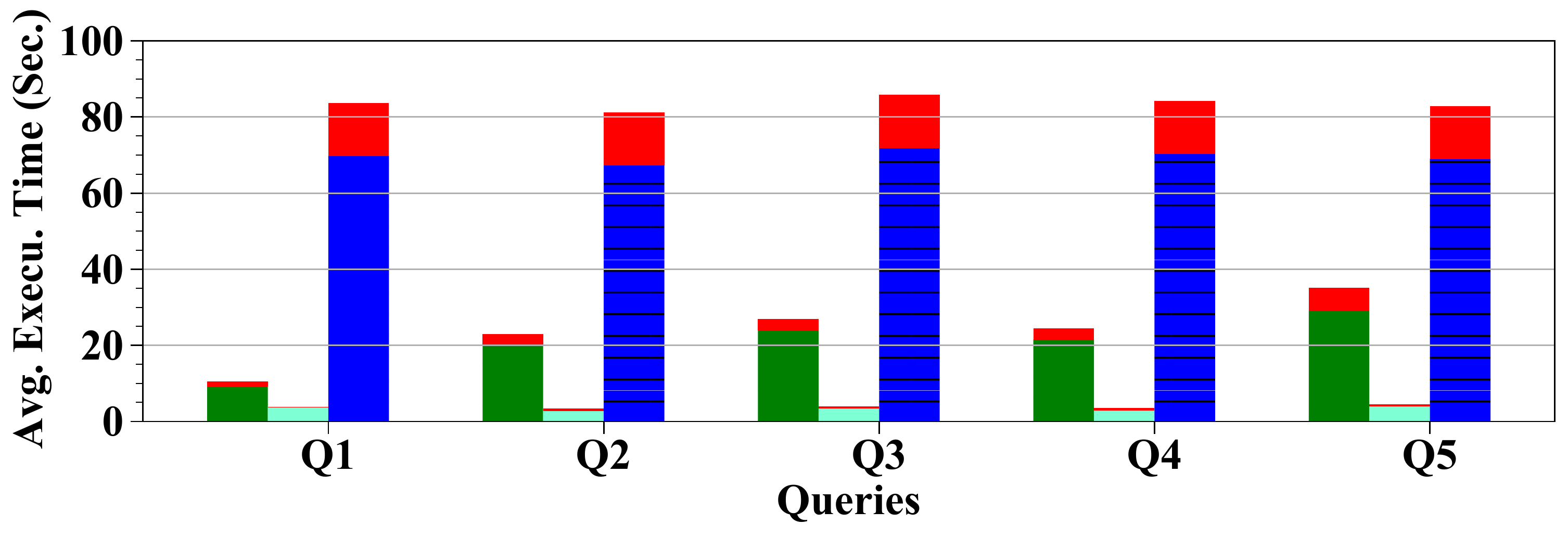}
    \BBB
    \caption{Exp 2: Range queries on 136M tuples.}
    \label{img-range-query-136}
			\end{minipage}
		\end{center}
		\BBB
	\end{figure*}

\subsection{\textsc{Concealer} Evaluation}
\label{subsec:concealer evaluation}
This section evaluates \textsc{Concealer} on different aspects such as scalability, dynamic data insertion, the impact of the range length, and the number of cell-ids.

\medskip\noindent\textbf{Exp 1: Throughput.} Since \textsc{Concealer} is designed to deal with data collected during an epoch arriving continuously over time, we measured the throughput (rows/minute) that \textsc{Concealer} can sustain to evaluate its overhead at the ingestion time. Algorithm~\ref{alg:Data encryption algorithm} can encrypt $\approx$37,185 WiFi connectivity tuple per minute. Also, it sustains our organization-level workload on the relatively weaker machine used for hosting \textsc{Concealer}.

%
%
%
%
%

\medskip\noindent\textbf{Exp 2: Scalability of \textsc{Concealer}.} To evaluate the scalability  of  \textsc{Concealer}, we compare the five queries as specified in Table~\ref{tab:Sample queries} on two WiFi datasets.


\medskip\noindent\textbf{\textit{Point query.}} For point query, we executed
a variant of Q1 when the time is fixed to be a point (instead of a range). Table~\ref{tab:point query} shows the average time taken by 5 randomly selected point queries (each executed 10 times). Note that, in \textsc{Concealer}, since the time taken by point queries is dependent upon the number of tuples allocated to the same cell-id (\textit{i}.\textit{e}., the bin size) that was 2,378 rows from small and 6,095 rows from large datasets. Table~\ref{tab:point query} shows that \textsc{Concealer} with secure SGX using BPB method took 0.23s on small and 0.90s on large datasets, while \textsc{Concealer}+ with the current non-secure SGX using BPB method took 0.37s on small and 1.38s on large datasets. \textbf{\emph{Time in \textsc{Concealer}+ increases compared to \textsc{Concealer}}}, since we need to obliviously form the queries and obliviously filter the tuples in \textsc{Concealer}+, (and that implements a data-independent sorting algorithm; see~\S\ref{subsec:Oblivious Trapdoor Creation and Filtering Steps}). Here, executing the same query on cleartext data took 0.03s on small and 0.05s on large datasets.

\begin{table}[!h]
\BB
    \centering
    \footnotesize
    \begin{tabular}{|l|l|l|l|l|}\hline
                    ~               & Small dataset (26M) & Large dataset (136M) \\\hline
        Cleartext processing    & 0.03s & 0.05s \\\hline
        \textsc{Concealer} (secure SGX) & 0.23s & 0.90s \\\hline
        \textsc{Concealer}+ (non-secure SGX) & 0.37s & 1.38s  \\\hline
    \end{tabular}
    \caption{Exp 2: Scalability of point query.}
    \label{tab:point query}
    \BBB\BBB\BBB
\end{table}


\medskip\noindent\textbf{\textit{Range queries.}} To evaluate range queries, we set the default time range for queries Q1-Q5 specified in Table~\ref{tab:Sample queries} to 20min (Exp 4 will study the impact of different range lengths). Figures~\ref{img-range-query-26} and~\ref{img-range-query-136} show
the results as an average over 5 queries (each executed 10 times). We compare BPB, eBPB (\S\ref{subsec:Enhanced Bin-Packing-Based eBPB Method}), and winSecRange (\S\ref{subsec:Optimizing Range Query}) with both \textsc{Concealer} and \textsc{Concealer}+.

Recall that BPB method answers a range query by converting it into many point queries and fetches bins corresponding to each point query; while eBPB method fetches rows corresponding to top-$\ell$ cells, which cover the given range. In \textsc{Concealer}, a cell covers $\approx$18min. Thus, for a range of 20min, BPB and eBPB methods fetch at most 3 bins and at most 3 cells, respectively, for query Q1. Thus, for answering Q1, BPB fetches $\approx$6K rows from small and $\approx$18K rows from large datasets, and eBPB fetches $\approx$1.5K rows from small and $\approx$3K rows from the large dataset.
Since eBPB retrieves few numbers of rows compared to BPB, in sSGX, eBPB performs better than BPB (see Figures~\ref{img-range-query-26} and~\ref{img-range-query-136}). \textsc{Concealer}+ again takes more time compared to \textsc{Concealer} for both eBPB and BPB, due to oblivious operations.
Note that in queries Q2-Q5, we use more locations; thus, the number of rows retrieved changes accordingly, and hence, the processing time also changes, as shown in Figures~\ref{img-range-query-26} and~\ref{img-range-query-136}.

For winSecRange, we set the range length on the time attribute to 8 hours in case of small and $\approx$1-day in case of large datasets. Thus, by fetching data for 1-day in the case of the large dataset, the enclave can execute any range query that is of a smaller time length. As expected, winSecRange took more time to execute range queries on both datasets, since it fetches and processes more rows ($\approx$70K rows from small and $\approx$400K from large datsets). While it takes more time compared to BPB and eBPB, it prevents the attack by sliding the time window (as shown in Example~\ref{subsec:Enhanced Bin-Packing-Based eBPB Method}.2), thereby, prevents revealing output-size attacks due to the sliding time window. Further, winSecRange under \textsc{Concealer}+ took more time compared to winSecRange under \textsc{Concealer}, due to oblivious operations. Recall that under \textsc{Concealer}, SGX architecture is vulnerable to side-channel attacks.

\begin{figure}[!t]
		\B
				\centering
    \includegraphics[scale=0.4]{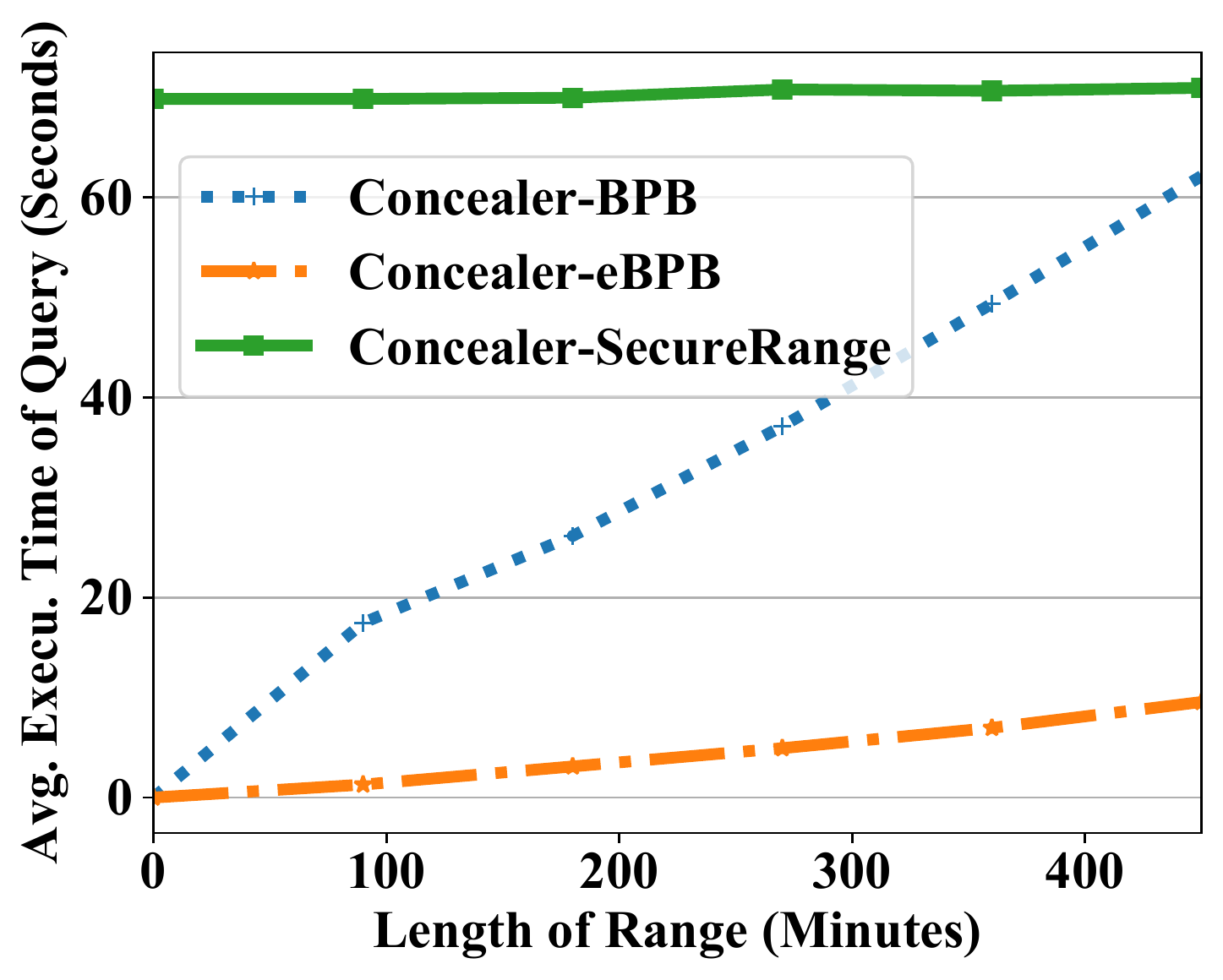}
    \BBB
    \caption{Exp 3: Range length impact.}
        \label{graph:Impact of range length}
	\end{figure}

\medskip\noindent\textbf{Exp 3. Impact of range length.} Figure~\ref{graph:Impact of range length} shows the impact of the length in a range query on \textsc{Concealer}, by executing Q1 (see Table~\ref{tab:Sample queries}) with different time lengths over the large dataset and compares three approaches BPB, eBPB, and winSecRange. In \textsc{Concealer}, a cell covers $\approx$18min. Thus, for instance, for a range of 100min, BPB and eBPB methods fetch at most 7 bins and at most 7 cells, respectively. As expected, as the length of range increases, the number of rows to be fetched from the DBMS also increases, thereby, the processing time at secure hardware increases. As mentioned in Exp 2, for the large dataset, the range length is set to $\approx$1-day for winSecRange method; hence, fetching/processing more tuples takes more time and remains almost constant for the given length of queries.


\medskip\noindent
\textbf{Exp 4.: Verification overhead.} SGX executes verification protocols by forming a hash chain on the retrieved tuples, and thus, the verification overhead is propositional to the number of retrieved tuples. In the point query, \textsc{Concealer} retrieves the minimum number of rows, \textit{i}.\textit{e}., 2,378 rows from small and 6,095 rows from large datasets, and verifying such rows took at most 0.09s and 0.16s, respectively. On the other hand, winSecRange method fetches the maximum number of rows, \textit{i}.\textit{e}., $\approx$70K rows from
small and $\approx$400K from large datsets, and verifying such rows took at most 0.8s and 3s, respectively. Note that the verification overhead is not very high.

\begin{table}[h]
    \centering
    \footnotesize
    \begin{tabular}{|l||l|l|l|l|}\hline
   &  \multicolumn{2}{|c|}{Point query} & \multicolumn{2}{|c|}{winSecRange}\\\hline
\#retrieved rows    & 2,376  & 6,095 & 70,000 & 400,000 \\\hline
Query execution time & 0.23  &  0.9 & 11 & 71 \\\hline
Verification overhead & 0.09  &  0.16 & 0.8 & 3 \\\hline
    \end{tabular}
    \caption{Exp 4: Query execution time vs verification overhead.}
    \label{tab:Query execution time vs verification overhead.}
    \BBB\BBB
\end{table}

\medskip\noindent\textbf{Exp 5. Impact of dynamic insertion.}
We also investigated how does \textsc{Concealer} support dynamic insertion of WiFi dataset. We initiated Algorithm~\ref{alg:Data encryption algorithm} for an hour of WiFi data at the peak hour, which included $\approx$50K tuples. For each insertion round, the grid size was $20 \times 1,250$ with 400 allocated cell-ids, and vectors $\mathit{cell\_id}[]$ and $\mathit{c\_tuple}[]$ of size $\approx$100KB were generated. In non-peak hours, we received at least $\approx$6K real rows. Recall that we are not interested in hiding peak vs non-peak hour data. Thus, all rows of each hour were sent using Algorithm~\ref{alg:Data encryption algorithm}. The query execution performance on dynamically inserted data depends on the number of rounds over which a query spans. For each round, we need to load the two vectors and fetch $\log |\mathit{Bin}|$ bins, as described in~\S\ref{sec:Insert Operation}. For peak hour data, we obtained 146 bins storing $\approx$400 tuples, in each, using BPB method (\S\ref{subsec:Bin-Packing-based BPB Method Query Execution}) that resulted in $\approx$3K row retrieval. On this data, it took at most $\approx$4s to execute a query, re-encrypting tuples, and writing them, for \textsc{Concealer}.

\medskip\noindent\textbf{Exp 6. Impact of bin-size.} As we create bins over real tuples and fake tuples, we study the impact of bin-size on the number of real and fake tuples included in a bin.
We vary the bin-size from 6,100 to 7,900 for answering a point query; see Figure~\ref{graph:Impact of bin-size}. As we know, first-fit-decreasing (FFD) ensures that all the bins should be half-full, except for the last one. On our dataset, FFD works well, and the bins contain most of the real tuples, as shown in Figure~\ref{graph:Impact of bin-size}. It shows that while changing the bin-size, we do not add more fake tuples to each bin.

\begin{figure}[!t]
		\BBB
		\begin{center}
			\begin{minipage}{.98\linewidth}
				\centering
    \includegraphics[scale=0.4]{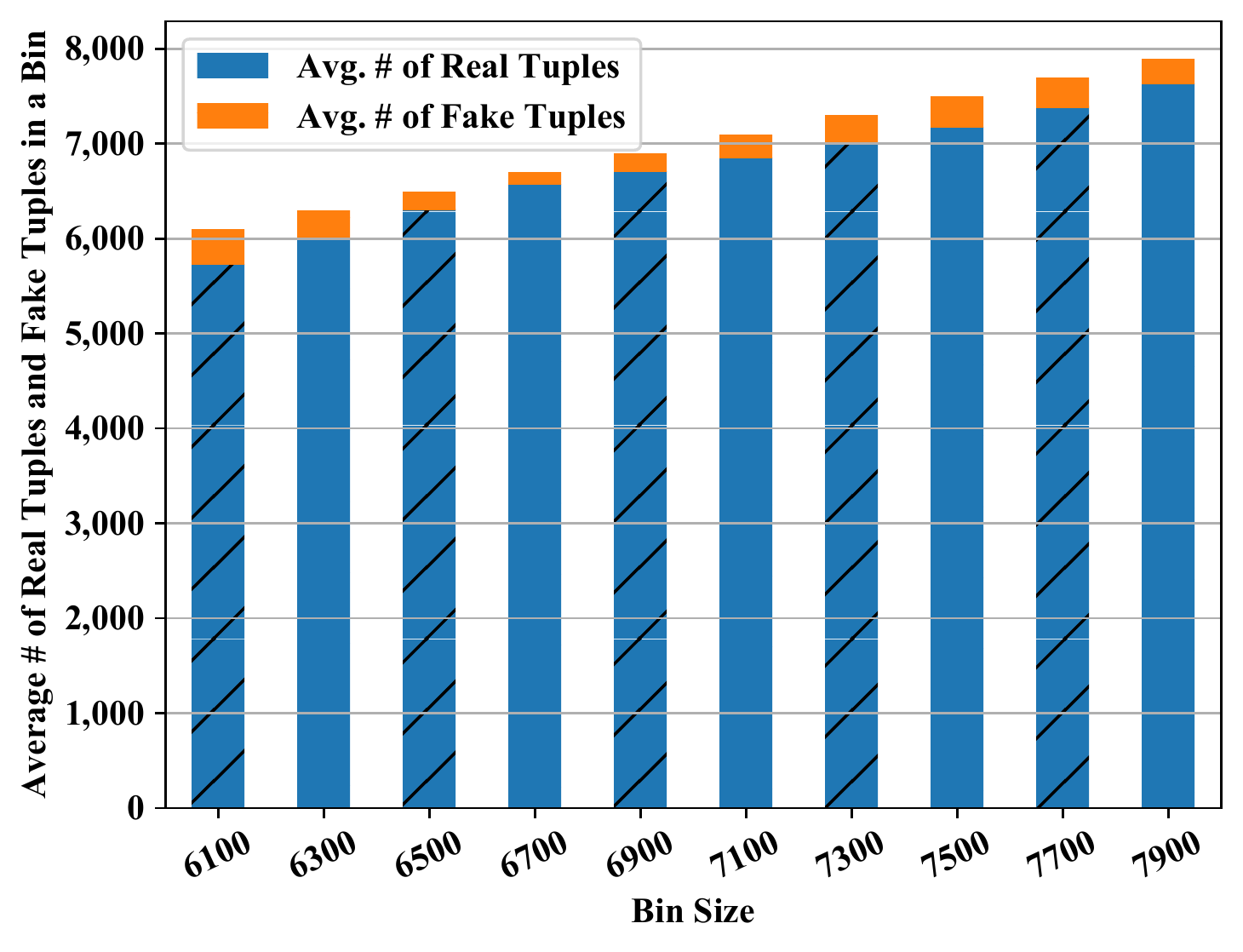}
    \BBB
    \caption{Exp 6: Impact of bin-size.}
    \label{graph:Impact of bin-size}
			\end{minipage}			
		\end{center}
	\end{figure}

\medskip
\noindent\textbf{Exp 7. Impact of the number of cells.} \textsc{Concealer} is based on the number of cell-ids allocated to the grid to execute a query. The number of cell-ids impacts the performance of the query execution significantly. To measure the impact of the number of cell-ids, we allocated a different number of cell-ids to our grid of $16,000\times 490$ and executed a point query over the large dataset. Figure~\ref{graph:impact of cell-ids} shows that when we allocate only a few cell-ids to the grid, we need to fetch a significant amount of data from the DBMS, since several cells of the grid hold the same cell-id, which in turn increase the bin-size. In contrast, as the number of cell-ids increases, we need to fetch lesser data from the DBMS, since by allocating several cells-ids, a cell-id is allocated to fewer tuples, which, in turn, decrease the bin-size.

\begin{figure}[!h]
		\BBB
		\begin{center}
			\begin{minipage}{.98\linewidth}
				\centering
 \includegraphics[scale=0.4]{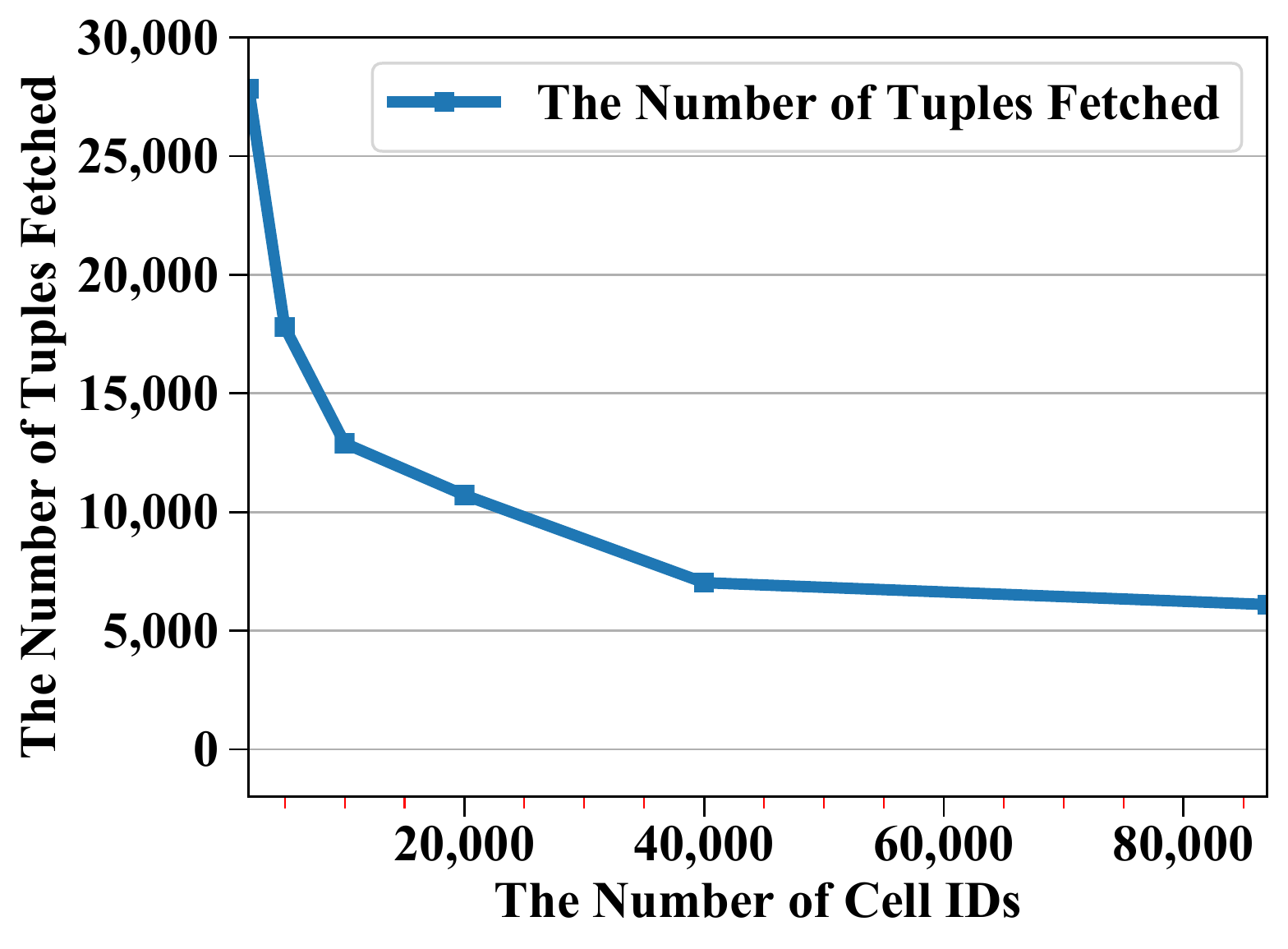}
\BBB
     \caption{Exp 7: Impact of \# cells-ids.}
     \label{graph:impact of cell-ids}
 				\end{minipage}			
		\end{center}
	\end{figure}

\begin{figure}[!h]
		\BBB\BBB
				\centering
    \includegraphics[scale=0.3]{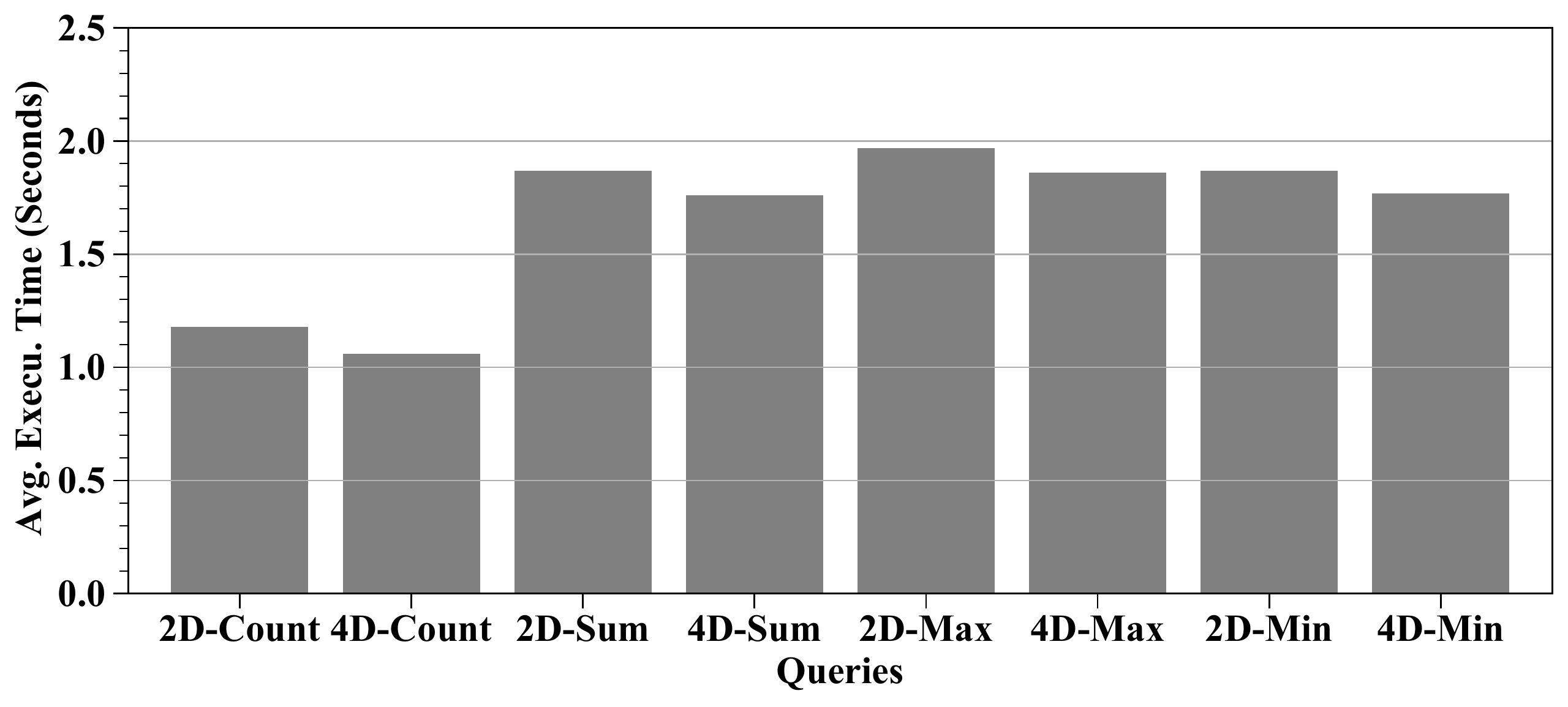}
    \BBB
    \caption{Exp 8: Query performance on TPC-H data.}
    \label{graph:tpc}
	\end{figure}
	
\medskip
\noindent\textbf{Exp 8. \textsc{Concealer} on TPC-H data.} To evaluate \textsc{Concealer}'s practicality in other types of queries, we executed two-dimensional (2D) and four-dimensional (4D) count, sum, maximum, and minimum queries on LineItem table using \textsc{Concealer}. 2D (and 4D) queries involved OK and LN (and OK, PK, SK and LN) attributes. Similar to the point query execution on WiFi dataset, 2D and 4D queries on TPC-H data require to fetch tuples allocated the same cell-id (in the same bin) according to Algorithm~\ref{alg:Query execution}. Thus, the query execution performance is dependent on the bin size, which was 400 rows for 2D grid and 6,258 for 4D grid. The query execution results are shown in Figure~\ref{graph:tpc}.

Figure~\ref{graph:tpc} shows that \textsc{Concealer} using BPB method took $\approx$1s to 2s on TPC-H dataset. Also, observe that the performance of count queries is $\approx$36\%--40\% better than the others queries, since count queries do not need to decrypt retrieved rows and it executed string matching on the filter column to produce the answer. In contrast, other queries that require exact values of the attributes decrypt retrieved rows, and hence, incur the overhead.


\subsection{Other Cryptographic Techniques}
\label{subsec:Comparison with Other Work}
Since in our setting $\mathcal{SP}$ uses secure hardware, we need to compare against a system that can support database operations using SGX. Thus, we selected an open-source SGX-based system: Opaque~\cite{DBLP:conf/nsdi/ZhengDBPGS17}.

\medskip
\noindent\textbf{\textit{Comparison between Opaque and \textsc{Concealer}.}} Opaque supports mechanisms to execute databases queries on encrypted data by first reading the entire data in the enclave, decrypting them, and then providing the answer. Note that both Opaque and \textsc{Concealer}  assume that SGX is secure against side-channel attacks, and hence, both reveal access-patterns. Thus, this is a fair comparison of the two systems, while \textsc{Concealer} avoids reading the entire dataset due to using the index and pushing down the selection predicate. Under this comparison, we execute point and range queries using Opaque and \textsc{Concealer}.

Further, note that since \textsc{Concealer}+ completely hides access-patterns inside SGX and partially hides access-patterns when fetching data in the form of fixed-size bins from the disk, we do not directly compare \textsc{Concealer}+ and Opaque due to different level of security offered by two systems.


\medskip
\noindent\textbf{Exp 9: Point queries on WiFi data.} {Opaque took more than 10min} on both WiFi datasets for executing a variant of Q1 when the time is fixed to be a point, since {\emph{Opaque requires reading the entire dataset}}. For the same query, \textsc{Concealer} took at most 0.23s on 26M and 0.9s on 136M rows.

Further, to execute the same query, \textsc{Concealer}+ took $\approx$1.4s. Thus, it shows that \textsc{Concealer}+, which hides access-patterns inside the enclave and prevents the output-size attack, is significantly better than Opaque.


\medskip
\noindent\textbf{Exp 10: Range queries on WiFi data.} {{In all range queries Q1-Q5 on WiFi data, \textsc{Concealer}'s eBPB and winSecRange algorithms take at most 4s and 71.9s over the large dataset compared to {Opaque that took at least 10min in any query}}}; see  Table~\ref{tab:Exp 8 Range queries Opaque vs SecureTimeDB}.

Further, note that to execute the same queries, \textsc{Concealer}+ takes at most 90s over the large dataset, which shows better performance of \textsc{Concealer}+ than Opaque in the case of range queries also.

\begin{table}[!h]
\BB
\footnotesize
    \centering
    \begin{tabular}{|l|l|l|l|l|l|l|l|}\hline
        System  & Q1 & Q2 & Q3 & Q4 & Q5 \\\hline
         Opaque & >10 m & >10 m &>10 m &>10 m & >10 m  \\\hline
\textsc{Concealer} eBPB & 3.6 s  & 2.8 s & 3.4 s & 3 s &  4s \\\hline
\textsc{Concealer} winSecRange & 70 s  & 67.2 s & 71.9 s & 70.2 s  & 68.9 s \\\hline
    \end{tabular}
    \caption{Exp 10: Range queries: Opaque vs \textsc{Concealer}.}
    \label{tab:Exp 8 Range queries Opaque vs SecureTimeDB}
    \BBB\BBB\BBB
\end{table}



\medskip
\noindent\textbf{Note.}
Except for Opaque, we did not experimentally compare \textsc{Concealer} against cryptographic techniques, since such techniques either offer different security levels \cite{DBLP:journals/pvldb/LiLWB14,DBLP:conf/icde/LiL17,DBLP:conf/sigmod/DemertzisPPDG16,DBLP:conf/esorics/KerschbaumT19}, or do not scale to large-sized data (\textit{e}.\textit{g}.,~\cite{DBLP:conf/ctrsa/IshaiKLO16,DBLP:journals/iacr/ArcherBLKNPSW18}) for which we have designed \textsc{Concealer}, or are not publicly available. Thus, we decide to compare \textsc{Concealer} results
with the reported result in different papers. Previous works on secure OLAP queries either support limited operations, reveal data due to DET or OPE, or scale to a smaller dataset. For example, Monomi~\cite{DBLP:journals/pvldb/TuKMZ13}, Seabed~\cite{DBLP:conf/osdi/PapadimitriouBC16},~\cite{DBLP:conf/vldb/GeZ07}, and~\cite{DBLP:conf/dawak/LopesTMCC14} reveal data due to DET or OPE. Nevertheless, Seabed supports a huge dataset ($\approx$1.75B rows). Novel SSEs, \textit{e}.\textit{g}.,~\cite{DBLP:journals/pvldb/LiLWB14,DBLP:conf/icde/LiL17,DBLP:conf/sigmod/DemertzisPPDG16,DBLP:conf/esorics/KerschbaumT19}, are very efficient, as given in their experiments; however, they provide results over 5M rows and susceptible to output-size attacks. We also checked access-pattern-hiding cryptographic work (\textit{e}.\textit{g}., DSSE~\cite{DBLP:conf/ctrsa/IshaiKLO16} and Jana~\cite{DBLP:journals/iacr/ArcherBLKNPSW18}) that are prone to output-size attacks; however, as expected, these systems are slow due to using highly secure cryptographic techniques that incur overheads and/or do not support a large-sized dataset. 
For example, an industrial MPC-based system Jana took 9 hours to insert 1M LineItem TPC-H rows, while executing a simple query took 532s. 





\section{Conclusion}
\label{sec:Conclusion}
This paper proposed \textsc{Concealer} that blends a carefully chosen encryption method with mechanisms to add fake tuples and exploits secure hardware to efficiently answer OLAP-style queries. 
We applied \textsc{Concealer} to real spatial time-series datasets, as well as, synthetic TPC-H data, and demonstrated scalability to large-sized data. Since \textsc{Concealer} allows indexing, its performance is similar to SSEs.
\textsc{Concealer} offers two key advantages over existing SSEs: first, it does not require new data structures to incorporate into databases and leverages existing index structures of modern databases. 
Second (and perhaps more importantly), \textsc{Concealer} offers a higher level of security, in addition to being IND-CKA, which existing SSEs are, by preventing leakage of data distributions via output-size.

\end{document}